\documentclass[pra,aps,superscriptaddress,nofootinbib,twocolumn,longbibliography]{revtex4-2}

\usepackage[colorlinks=true,
    linkcolor=red,
    citecolor=blue,
    urlcolor=blue]{hyperref}
\usepackage[pdftex]{graphicx}
\usepackage{mathrsfs}
\usepackage{amsthm}
\usepackage{float}
\usepackage{comment}
\usepackage{cancel}
\usepackage{physics}
\usepackage[safe]{tipa}
\usepackage{textcomp}
\usepackage{fontenc}
\usepackage{mathtools}
\usepackage{mathrsfs}
\usepackage{color}
\usepackage{colortbl}
\usepackage{graphics,graphicx}
\usepackage{txfonts}
\usepackage{lipsum, babel}
\usepackage{bbm}
\usepackage{enumerate}
\usepackage{enumitem}
\usepackage{adjustbox}
\usepackage{footnote}
\usepackage[usenames,dvipsnames]{xcolor}
\usepackage{framed}
\usepackage{soul}

\newcommand{\cE}{\mathcal{E}}

\newcommand{\cH}{\mathcal{H}}

\newcommand{\cQ}{\mathcal{Q}}

\newcommand{\Id}{\mathbbm{1}}
\newcommand{\cg}[1]{#1_M^{\mathrm{cg}}}
\newcommand{\OEA}{observational ergotropic advantage }
\newcommand{\OBE}{observational ergotropy }
\newcommand{\IE}{incoherent ergotropy }
\newcommand{\orcid}[1]{\href{https://orcid.org/#1}{\includegraphics[width=10pt]{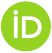}}}

\newtheorem{proposition}{Proposition}

\newtheorem{definition}{Definition}
\newtheorem{theorem}{Theorem}
\newtheorem{lemma}{Lemma}

\newtheorem{example}{Example}

\newtheorem{remark}{Remark}

\definecolor{cream}{rgb}{1.0, 0.99, 0.82}
\definecolor{celadon}{rgb}{0.67, 0.88, 0.69}
\definecolor{beaublue}{rgb}{0.74, 0.83, 0.9}
\definecolor{shadecolor}{rgb}{1.0, 0.99, 0.82}
\definecolor{agcol}{rgb}{0.9, 1.0, 0.9}

\begin{document}
\title{All coherent measurements provide observational ergotropic advantage}
\author{Soumik Mahanti\orcid{0000-0002-0380-0324}}
\email{soumik.mahanti@student.uts.edu.au}
\affiliation{School of Computer Science, University of Technology Sydney,
Ultimo, Sydney, New South Wales 2007, Australia}
\author{Rakesh Saini\orcid{0009-0005-1763-299X}}
\affiliation{School of Mathematical and Physical Sciences, Macquarie University, Sydney NSW 2113, Australia}
\author{Alexei Gilchrist\orcid{0000-0003-0075-5174}}
\affiliation{School of Mathematical and Physical Sciences, Macquarie University, Sydney NSW 2113, Australia}
\author{Arindam Mitra\orcid{0000-0001-8393-8886}}
\email{am56@iitbbs.ac.in}
\email{arindammitra143@gmail.com}
\affiliation{Department of Physics, School of Basic Sciences, Indian Institute of Technology Bhubaneswar, Odisha 752050, India.}

\begin{abstract}
The use of quantum resources in the work extraction from a quantum system is an emerging research topic. Recently, [arXiv:2602.22893] established the necessity of measurement coherence for obtaining advantage in work extraction from an unknown isolated quantum system. However, a quantitative relation between the magnitude of this advantage and established measures of measurement coherence is missing. Here, we establish such a connection by introducing a faithful operational quantifier of the work advantage provided by a measurement. We show that measurement coherence is necessary and sufficient for a positive advantage, and derive upper and lower bounds in terms of the robustness of measurement coherence and an $l_{\infty}$ norm based coherence measure respectively.  Finally, we examine this quantifier of advantage from the resource theoretic perspective. Our results provide an operational characterisation of measurement coherence as a resource for work extraction and reveal a nontrivial relation between its resource content and thermodynamic value.

\end{abstract}
\maketitle

\section{Introduction}
In the quantum world, there are several elements that do not have a classical analog. Examples of such elements include entanglement~\cite{Horodecki_Entanglement}, coherence~\cite{Streltsov_coherence}, measurement incompatibility~\cite{Heinosaari_meas_incomp,Uola_meas_incomp,Buscemi_meas_incomp_QRT}. These quantum features provide an advantage in several information-theoretic or thermodynamic tasks and thereby elevate 
quantum technology \cite{PhysRevA.106.L050401,PhysRevA.85.052308,QUACH20232195,VERMA2026116574,PhysRevLett.112.030602,tan2017quantum,PhysRevLett.128.200501,PhysRevApplied.20.044077}. For this reason, such quantum features are also known as quantum resources~\cite{Gour_QRT}. The role of state-based and state-assemblage based resources in information-theoretic and thermodynamic tasks has been widely studied. For example, entanglement enables teleportation~\cite{Nielsen_Chuang_2010}, dense coding~\cite{Nielsen_Chuang_2010}, and also provides an advantage in state discrimination, work extraction~\cite{roy2026}, etc. Another example is quantum coherence which is a resource in some quantum algorithms~\cite{PhysRevA.93.012111}, and provides advantages in work extraction~\cite{Korzekwa_2016} and channel discrimination~\cite{PhysRevA.93.042107}. Similarly, steerability, which is a state-assemblage based resource~\cite{PhysRevX.5.041008} provides security in one-sided device-independent QKD~\cite{PhysRevA.85.010301} and an advantage in work extraction~\cite{9qcc-7lq5} task. Apart from these well-explored resources, there exist measurement-based~\cite{wwn7-j8df,Tendick2023distancebased,Buscemi_meas_incomp_QRT,mitra2022quantifying,Baek_2020_Quantifying,baek2026maximalcoherencequantummeasurement,ao2026witnessrobustnessoperationalquantifier}, channel-based~\cite{PhysRevA.101.032331,PhysRevLett.125.180505,PhysRevA.100.052311,PhysRevA.109.062213}, and instrument-based quantum resources~\cite{ghai2026instrumentbasedquantumresourcesquantification,PRXQuantum.5.010340,mitra2026genuinecertificationincompatiblequantum} that have been comparatively less explored. Prominent examples of measurement-based resources include measurement incompatibility, measurement coherence, and measurement sharpness. While measurement incompatibility has received considerable attention, the operational significance of other measurement based resources is less fully understood. Measurement coherence is one such resource, and the application of measurement coherence in thermodynamic tasks is a comparatively open research direction, which is under focus here. 

The extraction of work from a quantum system is a very important aspect of modern-day quantum technologies that includes nanoscale energy storage \cite{QUACH20232195,VERMA2026116574}, nanoscale heat engine \cite{PhysRevLett.112.030602}, nanoscale refrigerator \cite{tan2017quantum}, enhancement of pricision measurement \cite{PhysRevLett.128.200501}, active initialization experiment of a superconducting qubit \cite{PhysRevApplied.20.044077}, etc. 
The study of work extraction from a microscopic system not only helps in the miniaturization of quantum devices but also helps to develop the understanding on the thermodynamic limit of the extractable work in quantum regime. The optimal work extractable from a closed quantum system is known as \emph{ergotropy}~\cite{Allahverdyan_2004}. To extract the optimal work from a system, one needs complete knowledge of the quantum state of the system. However, in practice, complete knowledge about the system requires full state tomography, which has a very high experimental cost, especially as the system dimension increases \cite{PhysRevLett.126.100402,PhysRevA.101.032321}. 
In quantum many-body systems, which are primary candidates for quantum batteries, tomographic requirements quickly become practically infeasible. There can be other related issues, including experimental limitation on performing certain kinds of measurements, measurement duration or the resource cost to perform it~\cite{Busch1990,Bernien2017,Strasberg_2022,Guryanova2020idealprojective}. To address these limitations, one can instead perform a suitable single measurement on the system to obtain partial knowledge about it and utilize it to extract maximum possible work. This leads to the notion of \emph{observational ergotropy}~\cite{PhysRevLett.130.210401}, which is a more useful quantity for practical purposes. However, the amount of extractable work then depends on the properties of the measurement as well as the input state. Hence, the quantum resources contained in quantum measurements play a vital role in the amount of maximally extractable work in this setting.


Recently, Ref.~\cite{Biswas_2026_Information} showed that coherence in quantum measurements may provide an advantage in observational ergotropy. However, that work did not quantitatively relate this advantage to any measure of measurement coherence. Motivated by this observation, we study the quantitative relationship between observational ergotropic advantage and measurement coherence. We introduce a state-independent quantifier of the advantage and derive upper and lower bounds in terms of measures of measurement coherence. We observe that measurement coherence is not only necessary, but also sufficient for a positive observational ergotropic advantage. Furthermore, we prove the tightness of these bounds for the qubit case and characterise this quantifier from a resource-theoretic perspective.

The rest of the paper is organized as follows. In Sec.~\ref{Sec:Prelims}, we discuss all relevant preliminaries. In Sec.~\ref{Sec:main}, we present our main results. In Sec.~\ref{Sec:conc}, we summarize our findings and discuss future directions.

\section{Preliminaries}
\label{Sec:Prelims}
\subsection{Measurement and measurement coherence}
A quantum measurement is a resolution of the identity into positive semi-definite operators, known as effects~\cite{Heinosaari_Ziman_2011}. More precisely, a set of operators $M:=\{M(\mu) \in \mathcal{L(H)}\}_{\mu \in \Omega_M}$ constitutes a measurement (also known as POVM) if $M(\mu) \geq 0, \text{ }\forall \mu \in \Omega_M$ and $\sum_\mu M(\mu) = \mathbbm{1}_\mathcal{H}$. Here, $\Omega_M$ is the set of outcomes for measurement $M$, and $\mathcal{L(H)}$ is the space of linear operators on the Hilbert space $\mathcal{H}$. If a measurement $M$ is performed on a state $\rho$, the probability of observing a particular outcome $\mu$ is given by the Born rule: $p(\mu|\rho, M) = \Tr(\rho M(\mu))$. In this work, we assume the dimension of $\mathcal{H}$ to be finite and the cardinality of the outcome set to be $n$. The set  of $n$-outcome measurements on a $d$ dimensional Hilbert space $\cH$ is denoted by $\mathscr{M}(\cH_d, n)$. An important class of measurements is the class of rank-1 projective measurements, where all the effects of a measurement are rank-1 projectors. Mathematically, $M=\{P_i\}_{i=0}^{d-1} : \sum_{i=0}^{d-1} P_i = \Id$ is a rank-1 projective measurement if $P_iP_j = \delta_{ij} P_i~\forall i,j$ and rank$(P_i) = 1~\forall i$. The set of rank-1 projective measurements is denoted as $\mathscr{P}(\cH_d,d)$. A rank-1 projective measurement is also known as a fine-grained measurement~\cite{Biswas_2026_Information}. Classical post-processing or coarse-graining of the fine-grained measurement is defined as 
\begin{equation}\label{Eq::Coarse-grained Measurement}
    Q_j = \sum_{i=0}^{d-1} D_{ji}P_i,~\forall j,
\end{equation}
where $\{Q_j\}_{j=0}^{n-1}$ is the \textit{coarse-grained} $n$-outcome POVM obtained from the fine-grained measurement $\{P_i\}_{j=0}^{n-1}$ via a column-stochastic matrix $D_{ji}:D_{ji}\geq 0~ \forall j,i~\text{and}~\sum_jD_{ji} =1$. This corresponds to assigning the final outcome $j$ by classical post-processing of the original outcome $i$, with conditional probability $D_{ji}$. This framework models a situation where an underlying rank-1 projective measurement is performed perfectly, but the resulting outcomes undergoes classical noise, blurring, or data reduction before being recorded. Next, we shift our focus to coherence in measurement and measures of measurement coherence.

Conventionally, state coherence refers to the presence of non-zero off-diagonal elements in a density matrix, indicating quantum superposition between different basis states. Similar to quantum states, coherence can be present in quantum measurements as well. Given an arbitrary basis of $\mathcal{H}$, one can represent any operator in $\mathcal{L(H)}$ as a $d\times d$ matrix. If all the effects of a measurement are diagonal in that basis, then the measurement is said to be incoherent. Mathematically, if $M(\mu) = \sum_i m_{\mu,i}\ketbra{i}{i},~\forall \mu$ in a given orthonormal basis $\{\ket{i}\}_{i=0}^{d-1}$, the measurement is said to be incoherent with respect to that basis. It is easy to see that any incoherent measurement can be obtained via classical post-processing of the rank-1 projective measurement with basis-projectors $\{\ketbra{i}{i}\}_{i=0}^{d-1}$. We denote the set of incoherent measurements with $n$-outcome as $\mathscr{I}(\cH_d, n)$, while the whole set of incoherent measurements with any number of outcomes is denoted as $\mathscr{I}(\cH_d)$.

To quantify the coherence associated with a quantum measurement, we employ the resource theory of measurement coherence introduced in Ref.~\cite{Baek_2020_Quantifying}. Unlike quantum states, quantum measurements can affect coherence in different ways. They may destroy the coherence present in a coherent state or, conversely, generate coherence when acting on an incoherent state. A more detailed discussion of the resource theory of measurement coherence, together with its specific role in our analysis, is provided in Appendix~\ref{Counter example of the PIO}. We enlist a few  measures of measurement coherence that faithfully quantify the amount of coherence present in a given measurement in the following. 
\begin{itemize}
    \item \emph{Robustness of measurement coherence:} For any measurement $M$, it is defined as~\cite{Baek_2020_Quantifying} 
    \begin{equation}\label{Eq::Robustness of coherence}
        R_C(M) := \min_{r\geq 0}\left\{\frac{M+rP}{1+r} =N\in \mathscr{I}(\cH_d, n)\}\right\}.
    \end{equation}
    
    For a given measurement $M \in \mathscr{M}(\cH_d, n)$, it captures the amount of minimal mixing with any measurement $P \in \mathscr{M}(\cH_d, n)$ that makes the resultant measurement $N$ an incoherent one. The robustness can be understood in the form of a convex optimisation problem where any $r$ that satisfies the relation in Eq.~\eqref{Eq::Robustness of coherence} is known as a feasible solution and the set of all feasible solutions is known as the feasible set. The robustness is a measure that does not depend on the choice of free operations once the free set is fixed.
    
    \item \emph{$l_\infty$-norm based measure:} For any measurement $M$, it is defined as~\cite{Baek_2020_Quantifying} 
    \begin{equation}\label{Eq::Infinity-norm based coherence}
        C_{l_\infty}(M) := \max_{i<j}\sum_{\mu=0}^{n-1}\left|\bra{i}M(\mu)\ket{j}\right|.
    \end{equation}
For a given measurement $M$ and basis $\{\ket{i}\}_{i=0}^{d-1}$, the quantity $C_{l_\infty}(M)$ captures the largest total magnitude of the off-diagonal elements of the measurement effects between any pair of basis states, summed over all measurement outcomes. Unlike the robustness, $C_{l_\infty}(M)$ can be evaluated directly from the matrix elements of the measurement effects without requiring an additional optimisation.
\end{itemize}
\subsection{Quantum Channel}
In the Schr\"odinger picture, a quantum channel is a map that transforms any quantum state into a valid quantum state. Mathematically, a linear map $\Lambda: \mathcal{L(H)} \rightarrow \mathcal{L(K)} $ that is completely positive and trace-preserving (CPTP) represents a quantum channel. Whereas in the Heisenberg picture, the state is unchanged but the operators evolve through the corresponding dual map $\Lambda^\dagger : \mathcal{L(K)} \rightarrow \mathcal{L(H)}$. The action of the dual map $\Lambda^\dagger$ is defined as $\Tr(A^\dagger\Lambda(B))=\Tr((\Lambda^\dagger(A))^\dagger B)$; or, equivalently, $\Tr(A(\Lambda(B))^\dagger)=\Tr( B^\dagger \Lambda^\dagger(A))$. Here, $B \in \mathcal{L(H)}$, and $A \in \mathcal{L(K)}$ are arbitrary.  Note that the probabilities are still given by the Born rule in both pictures $p(\mu|\Lambda(\rho)) = \Tr(\Lambda(\rho) M(\mu)) = \Tr(\Lambda^\dagger(M(\mu))\rho )$. 
In the Heisenberg picture, the action of the dual map on the measurement is denoted as the following: $\Lambda^\dagger(M) := \{\Lambda^\dagger(M(\mu))\}_{\mu \in \Omega_M}$. 
\subsection{Ergotropy and work extraction}
For any quantum state $\rho$, the maximum amount of extractable work in a closed-system setting is called ergotropy. Mathematically, ergotropy is defined as 
\begin{equation}\label{Eq::Ergotropy formal}
   \mathcal{E}(\rho) := \Tr(H\rho) - \min_U \Tr(HU\rho U^\dagger), 
\end{equation}
 where $H$ is the associated Hamiltonian of the system. The state that has the minimum energy among all the achievable states under unitary transformation of the input state is known as the \textit{passive state} of the input. This also serves as the definition of the passive state $\Pi_\rho$ as $\Tr(H\Pi_\rho) = \min_U \Tr(HU\rho U^\dagger)$. The function that gives the energy of the passive state, called the passive energy, is denoted as $P_H$. Formally, it is defined as $P_H(\rho) := \min_U \Tr(HU\rho U^\dagger)$. For a $d$-dimensional Hamiltonian $H=\sum_{i=0}^{d-1} E_i\ketbra{E_i}{E_i}$, where $\{\ket{E_i}_{i=0}^{d-1}\}$ forms the energy eigenbasis with eigenvalues in increasing order:  $E_0\leq E_1\leq\cdots\leq E_{d-1}$, the passive energy of a state is equal to 
 \begin{equation}\label{Eq::passive energy}
   P_H(\rho) = \sum_{i=0}^{d-1}E_i\lambda_i^\downarrow(\rho),  
 \end{equation}
 where $\lambda_i^\downarrow(\rho)$ are the eigenvalues of the state arranged in descending order, i.e., $\lambda_0^\downarrow\geq\ldots\geq \lambda_{d-1}^\downarrow$. 
 Hence, if one writes the density matrix of a state $\rho$ in the energy eigenbasis, the ergotropy of the state equals (from Eq.'s~\eqref{Eq::Ergotropy formal} and \eqref{Eq::passive energy}) 
 \begin{align}\label{Eq::Ergotropy calculational}
    \mathcal{E}(\rho) &= \Tr(H\rho)-P_H(\rho)\nonumber\\
    \mathcal{E}(\rho) &= \sum_i E_i(\rho_{ii}-\lambda_i^\downarrow(\rho)), 
 \end{align}
 with $\rho_{ii}$ as the diagonal entries in the chosen basis. This clearly shows that if a state is diagonal, then non-zero work extraction is possible only if the diagonal entries are not in a decreasing order. From here, one can make a distinction between two different contributions to ergotropy. One of them is completely classical in origin, called the incoherent ergotropy~\cite{Francica_2020_Quantum}, which gives us the maximum extractable work just by permuting the diagonal entries 
\begin{align}\label{Eq::Incoherent ergotropy}
  \mathcal{E}_{\rm{inc}}(\rho) &= \Tr(H\Delta\rho) - \min_U \Tr(HU\Delta\rho U^\dagger)\nonumber\\  
  &= \Tr(H\Delta\rho) - P_H(\Delta\rho).
\end{align}
We use the short notation $\Delta\rho$ to denote the completely dephased(diagonal) state under the action of completely dephasing map $\Delta$, defined as \[\Delta(\rho)=\Delta\rho = \sum_i\bra{E_i}\rho\ket{E_i}\ketbra{E_i}{E_i}.\]The incoherent ergotropy physically quantifies the maximum extractable work from the completely dephased state. While the other contribution to ergotropy, called the coherent ergotropy $\mathcal{E}_{\rm{co}}(\rho)=\mathcal{E}(\rho)-\mathcal{E}_{\rm{inc}}(\rho)$ is the extractable work coming from the coherence of a quantum state, which is a kind of resource that is completely quantum in origin. Ergotropy is interpreted as the maximum extractable work from a closed-system because the unitary operation is implemented by a cyclic driving Hamiltonian $H_d(t)$ as $U=\mathcal{T}\exp(\int_0^\tau-iH_d(t)dt/\hbar)$, where $\mathcal{T}$ is the time ordering operator, and $H_d(0) = H_d(\tau) = H$~\cite{Allahverdyan_2004}. 
The cyclic Hamiltonian drives the system back to the initial Hamiltonian and the energy difference between the final and the initial states is realized as the \emph{amount} of work extracted from the system. However, constructing the unitary that converts the input state into its passive state requires the complete knowledge of the state. To get complete information about an unknown input state, one typically needs quantum state tomography~\cite{MAURODARIANO2003205}, which itself is a very resource expensive task. An alternative option is to choose a single measurement (need not be informationally complete)
to probe the state and get partial information. This motivates the introduction of a more operational quantity named `observational ergotropy' that gives us the amount of accessible work when partial information about the input state is known through a single measurement~\cite{Safranek_2023_Work}.
Mathematically, \OBE is defined as the following: 
\begin{align}\label{Eq::Observational ergotropy}
  \mathcal{E_O}(\rho,M) &:= \Tr(H\rho) - \min_U \Tr(HU\cg{\rho} U^\dagger)\nonumber\\
  &= \Tr(H\rho) - P_H(\cg{\rho}),
\end{align}
 where $\cg{\rho}$ is called the coarse-grained state of the measurement $M$; defined as 
 \begin{equation}\label{Eq::coarse-grained state}
     \cg{\rho} := \sum_\mu \Tr(M(\mu) \rho)\frac{M(\mu)}{\Tr(M(\mu))}. 
 \end{equation} 
 The \OBE is upper bounded by the ergotropy for all measurements and input states~\cite{Biswas_2026_Information}
 \begin{equation}
     \mathcal{E_O}(\rho,M) \leq \mathcal{E}(\rho)~\forall \rho,M,
 \end{equation}
 and they are equal, if the coarse-grained state is same as the input state (see Eq.'s~\eqref{Eq::Ergotropy calculational} and \eqref{Eq::Observational ergotropy}).
 The coarse-grained state can also be expressed as 
 \begin{equation}
 \cg{\rho}=\sum_\mu p(\mu)W(\mu),
 \end{equation}
 where $p(\mu)=\Tr(\rho M(\mu))$ are the outcome probabilities and ${M(\mu)}/{\Tr(M(\mu))}=W(\mu)$ are the normalized effects. The associated channel with coarse-graining is denoted as $\Lambda_{M}^{cg}$, i.e, 
 \begin{equation}\label{Eq::Coarse-graining channel}
     \Lambda_{M}^{cg}(\rho) = \cg{\rho}.
 \end{equation}
 Channels that maps identity operators into itself are known as unital channels, and one can easily verify that $\cg{\Lambda}$ is unital. 

\section{Main Results}
\label{Sec:main}
The observational ergotropy depends on both the input state and the measurement used to probe it. Our aim is to characterize how useful a given measurement is for extracting work beyond the incoherent contribution and, in particular, to relate this operational advantage to the amount of coherence contained in the measurement. Since the observational ergotropy itself is state dependent, this requires a state-independent quantity that can be associated with the measurement alone. A natural benchmark is the maximum observational ergotropy attainable using incoherent measurements. It was shown in Ref.~\cite{Biswas_2026_Information} that, for any input state $\rho$, the \OBE for incoherent measurements are upper bounded by the incoherent ergotropy, i.e., 
\begin{equation*}
    \mathcal{E}_O(\rho,N)\leq \mathcal{E}_{\rm inc}(\rho),
    \qquad
    \forall N\in\mathscr{I}(\mathcal{H}_d).
\end{equation*}
Therefore, coherence in the measurement is necessary to obtain observational ergotropy beyond this incoherent limit. However, this result by itself does not establish whether every coherent measurement can provide such an advantage, nor does it quantify how the magnitude of the advantage depends on measurement coherence. These are the questions we address in the following. Analogous to ratio-based advantages commonly used in resource theories~\cite{Mori_2020_Operational,Uola_2020_All,biswas2025steerablequantumcorrelationsprovide,Skrzypczyk_2019_All}, one may consider the following observational ergotropic ratio to capture this advantage quantitatively
\begin{equation}\label{Eq::ergotropic quotient}
    \cQ_M^H
    =
    \max_{\rho}
    \frac{\mathcal{E_O}(\rho,M)}
    {\max\limits_{N\in \mathscr{I}(\cH_d)}\mathcal{E_O}(\rho,N)}
\end{equation}
We call this ratio the \textit{ergotropic quotient} of measurement $M$, where the numerator is \OBE for the measurement $M$ and the denominator is the maximum achievable \OBE for an incoherent measurement. The ratio is further maximised for all input states, rendering $\cQ_M^H$ as solely a property of the measurement $M$. A value $\cQ_M^H>1$ would indicate that the measurement enables
observational ergotropy beyond the contribution from any incoherent measurement for some input state. Therefore, $\cQ_M^H>1$ denotes obtainable quantum advantage in thermodynamic work extraction by the action of the measurement $M$. However,
in the following we establish why $\cQ_M^H$ is not a suitable quantifier of the
advantage. First, we note  a result from~\cite{Biswas_2026_Information} that the maximum \OBE for any incoherent measurement is the incoherent ergotropy.
\begin{equation}\label{Eq::maximum observ ergo for inc meas = inc ergo}
   \max_{N\in \mathscr{I}(\cH_d)}\mathcal{E_O}(\rho,N) = \mathcal{E}_{\rm{inc}}(\rho) 
\end{equation}
The problem arises with the quantity $\cQ_M^H$ (see Eq.~\eqref{Eq::ergotropic quotient}) as the \IE vanishes for a large class of states. From Eq.~\eqref{Eq::Incoherent ergotropy} and \eqref{Eq::passive energy}, it follows that the \IE for any state is 
\begin{equation}
    \cE_{\rm{inc}}(\rho) = \sum_{i=0}^{d-1}E_i(\rho_{ii}-\lambda_i^\downarrow(\Delta\rho))
\end{equation}
with $\rho_{ii}$ as the $i$-th diagonal element of the density matrix written in the energy basis. It is easy to see that if the dephased version of a state is passive ($\Pi_{\Delta\rho}=\Delta\rho$), or the diagonal entries of the state are arranged decreasingly, the \IE (denominator of Eq.~\eqref{Eq::ergotropic quotient}) goes to $0$, while the numerator can still be finite. So, $\cQ_M^H$ is undefined. Furthermore, even if one tries to modify the definition of the ergotropic quotient as
\begin{equation}
    \mathbbm{Q}_M^H = \sup_{\rho:\mathcal{E}_{\rm inc}(\rho)>0} \frac{\mathcal{E_O}(\rho,M)}{\mathcal{E}_{\rm inc}(\rho)},
\end{equation}
still the modified quotient $\mathbbm{Q}_M^H$ is not a good quantifier. We demonstrate this using an example of projective measurements for qubits under two-level Hamiltonian $H = E\ketbra{1}{1}$. A qubit state can be expressed as $\rho = \frac{1}{2}(I+\Vec{n}.\Vec{\sigma})$ with $|\vec{n}|\leq 1$, where $\Vec{\sigma}$ denotes the vector containing Pauli matrices. Similarly, any qubit rank-1 projective measurement can be written as $M=\{M(0),M(1)\}; M(0) = \frac{1}{2}(I+\Vec{m}.\Vec{\sigma}), M(1) = \frac{1}{2}(I-\Vec{m}.\Vec{\sigma})$ with $|\vec{m}|= 1$. One can explicitly calculate the observational ergotropy $\mathcal{E_O}(\rho,M)$ for this state and measurement (details in Appendix~\ref{App:observational ergotropic advantage for qubits}) which equals $\frac{E}{2}(|\Vec{n}.\Vec{m}|-n_z)$. While the incoherent ergotropy $\mathcal{E}_{\rm inc}(\rho)$ amounts to $\frac{E}{2}(|n_z|-n_z)$. When the denominator is nonzero ($n_z<0$), the modified quotient becomes
\begin{equation}\label{Eq::qubit modified ergotropic quotient}
    \mathbbm{Q}_M^H = \frac{1}{2}\left(1-\frac{|\Vec{n}.\Vec{m}|}{n_z}\right).
\end{equation} 
If the measurement is incoherent, then $\Vec{m} = (0,0,1)$; implying $\mathbbm{Q}_M^H$ goes to 1 (from Eq.~\eqref{Eq::qubit modified ergotropic quotient}). But, if the measurement has some coherence, then one can choose a state $\rho$ such that $n_z\rightarrow0^-$, while $|\Vec{n}.\Vec{m}|$ is still finite. Then $\mathbbm{Q}_M^H$ again diverges. So, for the qubit projective measurement, one finds that \[\sup_{\rho:\mathcal{E}_{inc}(\rho)>0} \frac{\mathcal{E_O}(\rho,M)}{\mathcal{E}_{inc}(\rho)}=
    \begin{cases}
        & 1, \text{ if $M$ is incoherent}\\
        & \infty, \text{ if  $M$ is coherent}
    \end{cases}
    \]
Therefore, the modified quotient is also not a good quantifier of the quantum advantage, in general. Hence, we make the following remark.
\begin{remark}\label{Ratio advantage does not work}
    The ratio of observational ergotropy and incoherent ergotropy, maximised over all states is not a faithful quantifier of the quantum advantage in ergotropic work extraction. 
\end{remark}
To \emph{avoid above-said technical problem}, we introduce an alternative quantity that faithfully captures this quantum advantage.
\begin{definition}\label{Def:Observational ergotropic advantage}
  {\textbf{Observational ergotropic advantage: }} For a measurement $M$, we define its observational ergotropic advantage as the largest excess in observational ergotropy that $M$ can achieve over all incoherent measurements, maximized over all input states. 
\begin{equation}\label{Eq::Observational ergotropic advantage general}
   A_M^H := \max_\rho \left[\mathcal{E_O}(\rho,M)-\max_{N\in \mathscr{I}(\cH_d)}\mathcal{E_O}(\rho,N)\right]. 
\end{equation}
\end{definition}
Using Eq.~\eqref{Eq::maximum observ ergo for inc meas = inc ergo}, the expression in Eq.~\eqref{Eq::Observational ergotropic advantage general} becomes 
\begin{equation}\label{Eq::Observational ergotropic advantage simple}
  A_M^H = \max_\rho \left[\mathcal{E_O}(\rho,M)- \mathcal{E}_{\rm{inc}}(\rho)\right].  
\end{equation}
The observational ergotropic advantage is non-negative for every measurement. Indeed, for the maximally mixed input \(\rho=\Id/d\), both the observational and incoherent ergotropies vanish, and hence $A_M^H \geq 0,~\forall M$. We note here that this newly introduced quantity \emph{observational ergotropic advantage} captures the quantum advantage in the same spirit as the quotients {\small$\cQ_M^H/\mathbbm{Q}_M^H$} would, if they were faithful. In the expression of $A_M^H$, we used difference of \OBE and \IE instead of ratio. Therefore, even when the \IE is 0, the difference would be unaffected. 
Note that, for ratio based quantifiers like {\small$\cQ_M^H/\mathbbm{Q}_M^H$}, when it is more than 1, the quantum advantage is certified; whereas a difference, when positive: $A_M^H>0$, certifies the same. Therefore, a positive \OEA($A_M^H$) certifies a quantum advantage.   
Combining Eq.~\eqref{Eq::Incoherent ergotropy} and \eqref{Eq::Observational ergotropy}, one can rewrite Eq.~\eqref{Eq::Observational ergotropic advantage simple} as 
\begin{equation}\label{Eq::Observational ergotropic advantage passive}
A_M^H = \max_\rho \left[P_H(\Delta\rho)- P_H(\cg{\rho})\right].      
\end{equation}
Therefore, operationally \OEA gives the maximum possible passive energy gap between the dephasing and coarse-graining of a state. Furthermore, we observe that the observational ergotropic advantage is non-increasing under coarse-graining of the measurement. If one starts with a rank-1 projective measurement $M \in \mathscr{P}(\cH_d,d)$, and under coarse-graining gets a POVM $M' \in \mathscr{M}(\cH_d,n)$, the \OEA is non-increasing as shown below. 
\begin{align*}
\mathcal{E_O}(\rho,M')&\leq \mathcal{E_O}(\rho,M),~ \forall \rho \qquad\left(\text{from \cite{Biswas_2026_Information}}\right)\\
\implies \mathcal{E_O}(\rho,M') - \mathcal{E}_{\rm{inc}}(\rho)&\leq \mathcal{E_O}(\rho,M)-\mathcal{E}_{\rm{inc}}(\rho),~ \forall \rho.
\end{align*}
Maximising both sides over all states, we get 
\begin{equation}\label{Eq::observational ergotropic advantage is monotonic under classical post-processing}
   A_{M'}^H\leq A_{M}^H. 
\end{equation} It implies that more informative measurements (fine-grained measurements) yield more quantum advantage. Next, we show that this quantity is faithful to the measurement coherence, meaning it is a non-negative quantity and is equal to 0 if and only if there is no coherence in the measurement.
\begin{theorem}\label{Th::Coherence necessary and sufficient}
    Coherence in measurement is both necessary and sufficient for a positive observational ergotropic advantage.
\end{theorem}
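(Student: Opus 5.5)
The plan is to use the passive-energy form of the observational ergotropic advantage, Eq.~\eqref{Eq::Observational ergotropic advantage passive}, and to treat the two directions separately. Throughout, coherence is understood with respect to the energy eigenbasis $\{\ket{E_i}\}$, and $H$ is assumed not to be proportional to $\Id$ (otherwise every ergotropy vanishes trivially).

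\emph{Necessity.} Let $M\in\mathscr{I}(\cH_d)$. The result of Ref.~\cite{Biswas_2026_Information} quoted above gives $\mathcal{E_O}(\rho,M)\le\mathcal{E}_{\rm inc}(\rho)$ for every $\rho$. Hence each term in Eq.~\eqref{Eq::Observational ergotropic advantage simple} is $\le 0$. Combined with the already-noted $A_M^H\ge 0$ (from $\rho=\Id/d$), this yields $A_M^H=0$. So a positive advantage forces $M$ to be coherent.

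\emph{Sufficiency.} Suppose $M$ is coherent, so that $\bra{E_i}M(\mu_0)\ket{E_j}=c\neq 0$ for some $\mu_0$ and some $i\neq j$. I would take the perturbed maximally mixed state
\begin{equation*}
\rho_\epsilon=\frac{\Id}{d}+\epsilon X,\qquad X=e^{i\varphi}\ketbra{E_j}{E_i}+e^{-i\varphi}\ketbra{E_i}{E_j},
\end{equation*}
with the phase $\varphi$ chosen so that $\Tr(M(\mu_0)X)=2|c|$, and $\epsilon>0$ small enough that $\rho_\epsilon\ge 0$. Because $X$ has zero diagonal, $\Delta\rho_\epsilon=\Id/d$, and so $P_H(\Delta\rho_\epsilon)=\Tr(H)/d$.

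By unitality of $\Lambda_M^{cg}$,
\begin{equation*}
\cg{(\rho_\epsilon)}=\frac{\Id}{d}+\epsilon Y,\qquad Y=\sum_\mu c_\mu W(\mu),\quad c_\mu=\Tr(M(\mu)X).
\end{equation*}
To show $Y\neq 0$, I would pair it with $X$:
\begin{equation*}
\Tr(XY)=\sum_\mu \frac{c_\mu^2}{\Tr M(\mu)}\ge \frac{4|c|^2}{\Tr M(\mu_0)}>0 .
\end{equation*}
Hence the coarse-grained state is not maximally mixed.

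It then remains to show that $P_H(\sigma)<\Tr(H)/d$ for every non-maximally-mixed state $\sigma$. This is the step needing care, because $H$ may be degenerate. I would use Abel summation:
\begin{equation*}
\sum_k E_k\lambda_k^\downarrow-\frac{\Tr H}{d}=-\sum_{k=1}^{d-1}(E_k-E_{k-1})\Bigl(S_k-\frac{k}{d}\Bigr),
\end{equation*}
where $S_k=\sum_{l<k}\lambda_l^\downarrow$. For a descending, non-uniform spectrum one has $S_k>k/d$ strictly for every $1\le k\le d-1$, since the average of the top $k$ entries exceeds the overall average. Since $H\not\propto\Id$, at least one gap $E_k-E_{k-1}$ is positive, so the sum is strictly negative.

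Therefore $P_H(\Delta\rho_\epsilon)-P_H(\cg{(\rho_\epsilon)})>0$, and consequently $A_M^H>0$. The main obstacle is exactly this strictness argument, together with confirming that the coarse-graining does not wash out the coherence; the trace pairing with $X$ handles the latter.
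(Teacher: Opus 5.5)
Your proof is correct and follows the paper's route. Necessity comes from the cited bound $\mathcal{E_O}(\rho,N)\le\mathcal{E}_{\rm inc}(\rho)$ together with the input $\rho=\Id/d$, and sufficiency comes from perturbing $\Id/d$ along a zero-diagonal Hermitian direction, so that $\Delta\rho=\Id/d$ while the coarse-grained state is not maximally mixed. The paper perturbs along the full off-diagonal part $M(\mu)-\Delta(M(\mu))$ and only asserts that its coarse-graining is nonzero and that the passive energy strictly drops; your single-pair $X$ with the pairing $\Tr(XY)>0$, and your Abel-summation strictness argument (valid for any $H\not\propto\Id$, including degenerate $H$), make exactly those two steps rigorous.
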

\begin{proof}
    Proving that coherence is necessary for a positive \OEA is easier, following a result in Ref.~\cite{Biswas_2026_Information}. If $M=\left\{ M(\mu)\right\}$ is any incoherent POVM, then $\mathcal{E_O}(\rho,M)\leq \mathcal{E}_{inc}(\rho)$ for an arbitrary state $\rho$. Therefore, $A_M^H\leq 0$. But for the maximally mixed state $\rho=\mathbbm{1}/d$, the coarse-grained state $\cg{\rho}$ and the dephased state $\Delta\rho$ both are maximally mixed. Therefore, $A_M^H=0$, which proves the necessary part. However, if at least one of the effects is not diagonal in the energy basis, that is $\exists \mu: M(\mu)-\Delta(M(\mu))\neq0$, we show that the \OBE is strictly positive. Take the input state $\rho=\mathbbm{1}/d+\epsilon X$, where $X = M(\mu)-\Delta(M(\mu))$. Note that it is already Hermitian and of unit trace and one can choose a suitable $\epsilon:~0<\epsilon<\frac{1}{d |\lambda_{\rm{min}}(X)|}$ to ensure positivity, where $\lambda_{\rm{min}}(X)$ is the minimum eigenvalue of $X$. Therefore, the input state is a valid state. The coarse-grained state and the dephased state are $\cg{\rho} = \mathbbm{1}/d+\epsilon \cg{X}$ and $\Delta\rho=\mathbbm{1}/d$ respectively, and the set of eigenvalues for the corresponding states are 
    \begin{align*}
       \{\lambda(\cg{\rho})\}&=\{1/d+\epsilon x_{\mu,i}\}~ \text{ and}\\
       \{\lambda(\Delta\rho)\}&=\{1/d\}, 
    \end{align*}
    where $\{ x_{\mu,i}\}$ are the eigenvalues of $\cg{X}$. Using Eq.~\eqref{Eq::passive energy}, one can find that the passive energy of the dephased state $P_H(\Delta\rho)=\Tr(H)/d$ and the passive energy of the coarse-grained state $P_H(\cg{\rho})={\Tr(H)}/{d} + \epsilon \sum_i E_ix^\downarrow_{\mu,i}$, with $x^\downarrow_{\mu,i}$ being the eigenvalues of $\cg{X}$ in decreasing order. We note here that $\cg{X} \neq 0$, but all the diagonal entries are 0. Since $\sum_i x^\downarrow_{\mu,i} = 0$ and the energies are arranged in increasing order,
    $\sum_i E_ix^\downarrow_{\mu,i} <0$. Therefore, $P_H(\Delta \rho)-P_H(\cg{\rho}) = -\epsilon\sum_i E_ix^\downarrow_{\mu,i} > 0$ implying  $A_M^H >0$. Hence, coherence is sufficient for the quantum advantage. 
\end{proof}
\noindent Therefore, we have shown that any coherent measurement is capable of providing advantage over incoherent ones in terms of work extraction. But the immediate next questions are: what is the maximum possible advantage and which measurements achieve it. We answer them in the following:  
\begin{proposition}
Let $H$ be a non-degenerate $d$-dimensional Hamiltonian with ground-state energy $E_0$. Then, for every measurement $M$,
\begin{equation}\label{Eq::OEA universal ceiling}
    A_M^H \leq \frac{\Tr(H)}{d} - E_0 .
\end{equation}
Equality holds if and only if there exists a subset of outcomes whose corresponding effects sum to a maximally coherent rank-one projector $\ketbra{\psi}{\psi}$, while all remaining effects have support orthogonal to $\ket{\psi}$. The maximum advantage is then attained by that maximally coherent input state $\ketbra{\psi}{\psi}$.

\end{proposition}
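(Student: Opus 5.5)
The plan is to work with the passive-energy form of the advantage, Eq.~\eqref{Eq::Observational ergotropic advantage passive}, namely $A_M^H=\max_\rho\,[P_H(\Delta\rho)-P_H(\cg{\rho})]$, and to bound the two terms separately. For the second term, $P_H(\sigma)=\sum_i E_i\lambda_i^\downarrow(\sigma)\geq E_0\sum_i\lambda_i^\downarrow(\sigma)=E_0$ for every state $\sigma$; in particular this holds for $\sigma=\cg{\rho}$. For the first term, I will use Chebyshev's sum inequality. The sequence $\lambda_i^\downarrow(\Delta\rho)$ is non-increasing and the sequence $E_i$ is non-decreasing, so
\begin{equation*}
P_H(\Delta\rho)=\sum_i E_i\lambda_i^\downarrow(\Delta\rho)\leq \frac{1}{d}\Big(\sum_i E_i\Big)\Big(\sum_i \lambda_i^\downarrow(\Delta\rho)\Big)=\frac{\Tr(H)}{d}.
\end{equation*}
Combining the two bounds gives $P_H(\Delta\rho)-P_H(\cg{\rho})\leq \Tr(H)/d-E_0$ for every $\rho$, and maximising over $\rho$ yields Eq.~\eqref{Eq::OEA universal ceiling}.

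For the equality characterisation, suppose the bound is attained at some $\rho$. Then both inequalities must be tight. Non-degeneracy means the $E_i$ are strictly increasing, so equality in Chebyshev's inequality forces all $\lambda_i(\Delta\rho)$ to be equal, i.e.\ $\Delta\rho=\Id/d$. Likewise $E_0<E_1$ forces $\lambda_0^\downarrow(\cg{\rho})=1$, so $\cg{\rho}=\ketbra{\psi}{\psi}$ is pure.

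The main step is to translate purity of $\cg{\rho}=\sum_\mu p(\mu)W(\mu)$ into structure on $M$. A pure state is an extreme point of the state space, so every normalised effect $W(\mu)$ with $p(\mu)>0$ must equal $\ketbra{\psi}{\psi}$. Let $S=\{\mu: p(\mu)>0\}$. Then $M(\mu)=c_\mu\ketbra{\psi}{\psi}$ for each $\mu\in S$, and hence $\sum_{\mu\in S}M(\mu)=c\ketbra{\psi}{\psi}$ with $c=\sum_{\mu\in S} c_\mu\leq 1$, since this sum is bounded by $\Id$. Because the probabilities sum to one,
\begin{equation*}
1=\sum_{\mu\in S}p(\mu)=\Tr\Big(\rho\sum_{\mu\in S}M(\mu)\Big)=c\,\bra{\psi}\rho\ket{\psi}\leq c.
\end{equation*}
This forces $c=1$ and $\bra{\psi}\rho\ket{\psi}=1$, that is, $\rho=\ketbra{\psi}{\psi}$. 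The remaining effects then sum to $\Id-\ketbra{\psi}{\psi}$, so their supports are orthogonal to $\ket{\psi}$. Finally, $\Delta\rho=\Id/d$ gives $|\braket{E_i}{\psi}|^2=1/d$ for all $i$, so $\ket{\psi}$ is maximally coherent. I expect this extremality step to be the only delicate point: one has to handle outcomes with zero probability, and check that $c=1$ follows rather than being assumed.

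For the converse, take $M$ with the stated structure and input $\rho=\ketbra{\psi}{\psi}$. The outcomes outside $S$ have zero probability, and each normalised effect in $S$ is $\ketbra{\psi}{\psi}$, so $\cg{\rho}=\ketbra{\psi}{\psi}$ and $P_H(\cg{\rho})=E_0$. Maximal coherence of $\ket{\psi}$ gives $\Delta\rho=\Id/d$ and hence $P_H(\Delta\rho)=\Tr(H)/d$. The ceiling is therefore attained, and it is attained by the maximally coherent input $\ketbra{\psi}{\psi}$, as claimed.
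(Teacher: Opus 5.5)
Your proof is correct, and its skeleton matches the paper's. Both arguments bound $P_H(\Delta\rho)\le \Tr(H)/d$ and $P_H(\cg{\rho})\ge E_0$ separately. In the equality case both deduce $\Delta\rho=\Id/d$ and that $\cg{\rho}=\ketbra{\psi}{\psi}$ is pure. Both then use extremality of pure states and normalisation of the outcome probabilities to force $\rho=\ketbra{\psi}{\psi}$ and $\sum_{\mu\in S}M(\mu)=\ketbra{\psi}{\psi}$.

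The one real difference is the tool for the first bound. The paper uses the fact that every spectrum majorises the uniform one, together with Schur-concavity of $P_H$. You instead apply Chebyshev's sum inequality to the oppositely ordered sequences $\lambda_i^\downarrow(\Delta\rho)$ and $E_i$.

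The majorisation route is more conceptual and reusable, since it compares any two majorisation-related spectra. However, Schur-concavity alone gives only the non-strict inequality. The paper therefore simply asserts that, for non-degenerate $H$, $P_H(\sigma)=\Tr(H)/d$ forces $\sigma=\Id/d$. Your route is elementary and self-contained, and its equality condition derives this directly: every term $(\lambda_i-\lambda_j)(E_i-E_j)$ must vanish.

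You are also more explicit than the paper in two places. You obtain $c=1$ from $c\le 1$ and $1=c\bra{\psi}\rho\ket{\psi}$ rather than asserting it. You also check the converse direction, which the paper leaves implicit. In that converse, the only unstated step is that $0\le M(\mu)\le\ketbra{\psi}{\psi}$ forces each $M(\mu)$ with $\mu\in S$ to be a multiple of $\ketbra{\psi}{\psi}$, which is immediate.
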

\begin{proof}
     The eigenvalue vector of any state $\rho$ majorises the uniform distribution $\lambda^\downarrow(\rho)\succ \lambda(\mathbbm{1}/d)$~\cite{Marshall_2011_Inequalities:}. Also, the passive energy function is Schur-concave (see End matter of Ref.~\cite{Biswas_2026_Information}). 
     Combining these two facts, it follows that $P_H(\rho)\leq P_H(\mathbbm{1}/d)$, with $P_H(\mathbbm{1}/d) = {\Tr(H)}/{d}=\Bar{E}$. Also, the passive energy of any state $\sigma$ cannot be lower than the ground state energy $P_H(\sigma)\geq E_0$. Therefore, for any two arbitrary states $\rho\text{ and }\sigma$, the passive energy difference can not be larger than this gap $P_H(\rho)-P_H(\sigma) \leq {\Tr(H)}/{d}-E_0$. Hence the \OEA is also upper bounded by this universal ceiling $A_M^H\leq {\Tr(H)}/{d}-E_0$. This proves the first part, with equality only when $P_H(\Delta \rho) = \Bar{E}$, and $P_H(\cg{\rho})=E_0$. Now, if the Hamiltonian is non-degenerate 
    : $~E_0<E_1<\cdots<E_{d-1}$; the passive energy of any state is equal to $\Bar{E}$ if and only if it is maximally mixed; while the passive energy is equal to $E_0$ only if the state is pure. Therefore, we obtain the maximum advantage for a measurement and an input state such that $\Delta\rho = \Id/d$, and $\cg{\rho} = \ketbra{\psi}{\psi}$ for some $\ket{\psi} \in \cH$. From Eq.~\ref{Eq::coarse-grained state}, this implies that $\sum_\mu p(\mu)W(\mu) = \ketbra{\psi}{\psi}$. Since, $W(\mu)$ are nothing but normalized states, this can only be true if 
     \begin{align}\label{Eq::maximum advantage}
         W(\mu) &=\ketbra{\psi}{\psi},~\forall \mu:p(\mu) >0\text{, and}\nonumber\\
         \sum_{\mu:p(\mu)>0} p(\mu) &= 1
     \end{align}
     Which means all the effects with non-zero probabilities must be proportional to the projector $\ketbra{\psi}{\psi}$, i.e.  $M(\mu)=\Tr(M(\mu))\ketbra{\psi}{\psi}$. From this and Eq.~\eqref{Eq::maximum advantage}, it follows
     \begin{equation}
         \sum_\mu \bra{\psi}\rho\ket{\psi}\Tr(M(\mu)) = 1
     \end{equation}
     which is only true, if
     \begin{align}
         \rho=\ketbra{\psi}{\psi}, \text{ and }& \sum_{\mu:p(\mu)>0} \Tr(M(\mu)) = 1\nonumber \\
         \implies \rho = \ketbra{\psi}{\psi},~\text{and }&~\sum_{\mu:\Tr(\rho M(\mu))>0}M(\mu) = \ketbra{\psi}{\psi} 
     \end{align}
    But since, $\Delta\rho = \Id/d$, we have $|\bra{E_i}\ket{\psi}|^2=1/d$, implying $\ket{\psi} = 1/\sqrt{d}\sum_i  e^{\phi_i}\ket{E_i}$, or the projector $\ketbra{\psi}{\psi}$ is maximally coherent in the energy eigenbasis. This proves that to get the maximum observational ergotropic advantage, some effects of the measurement must add up to a maximally coherent projector, while all the remaining effects have support orthogonal to it and the input state must be the same maximally coherent projector.  
\end{proof}
\begin{theorem}\label{Th::Coherence upper bound intermediate}
    For a measurement $M\in \mathscr{M}(\cH_d, n)$, the \OEA is upper bounded by the following quantity:
    \begin{equation*}
        A_M^H \leq \sum_{k=1}^{d-1} (E_k-E_{k-1}) \max_{\substack{
0 \leq Q \leq \mathbbm{1} \\
\Tr(Q)=k
}}
        \lambda_{\rm{max}}\left[\cg{\Lambda}(Q)-\Delta(\cg{\Lambda}(Q))\right]
    \end{equation*}
    where $\lambda_{\rm{max}}(A)$ denotes the maximum eigenvalue of an operator $A$.
\end{theorem}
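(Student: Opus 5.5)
The plan is to expand the passive energy as a telescoping sum of Ky Fan partial sums and then bound each term separately using the self-duality of the coarse-graining channel.

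\emph{Step 1: rewrite the passive energy.} For a state $\sigma$, let $S_k(\sigma)=\sum_{i=0}^{k-1}\lambda_i^\downarrow(\sigma)$ denote the sum of its $k$ largest eigenvalues. Abel summation applied to Eq.~\eqref{Eq::passive energy}, together with $S_d(\sigma)=1$, gives
\begin{equation*}
P_H(\sigma)=E_{d-1}-\sum_{k=1}^{d-1}(E_k-E_{k-1})\,S_k(\sigma).
\end{equation*}
By Ky Fan's maximum principle we also have $S_k(\sigma)=\max_{0\le Q\le \Id,\ \Tr Q=k}\Tr(Q\sigma)$, where the maximum is attained at a rank-$k$ spectral projector. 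Inserting this into Eq.~\eqref{Eq::Observational ergotropic advantage passive} gives
\begin{equation*}
A_M^H=\max_\rho\sum_{k=1}^{d-1}(E_k-E_{k-1})\bigl[S_k(\cg{\rho})-S_k(\Delta\rho)\bigr].
\end{equation*}
Since the energies are ordered increasingly, every weight $E_k-E_{k-1}$ is nonnegative. It therefore suffices to bound each bracket, uniformly in $\rho$, by the corresponding $\max_Q\lambda_{\max}$ term.

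\emph{Step 2: bound a single bracket.} Fix $\rho$ and $k$, and let $Q^*$ be the Ky Fan optimiser for $\cg{\rho}$. Then $S_k(\cg{\rho})=\Tr(Q^*\cg{\Lambda}(\rho))=\Tr(\cg{\Lambda}(Q^*)\rho)$. This step uses that $\cg{\Lambda}$ is self-dual: its action $X\mapsto\sum_\mu \Tr(M(\mu)X)M(\mu)/\Tr M(\mu)$ has a symmetric kernel, so $\Tr(A\,\cg{\Lambda}(B))=\Tr(\cg{\Lambda}(A)B)$ for Hermitian $A,B$.

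For the dephased term, I use the test operator $Q'=\Delta(\cg{\Lambda}(Q^*))$. It is feasible for the Ky Fan problem at level $k$, because:
\begin{itemize}
\item $\cg{\Lambda}$ and $\Delta$ are both positive and unital, so $0\le Q'\le\Id$;
\item both maps are trace preserving, so $\Tr Q'=\Tr Q^*=k$.
\end{itemize}
Hence $S_k(\Delta\rho)\ge\Tr(Q'\Delta\rho)=\Tr(Q'\rho)$, where the equality holds because $Q'$ is diagonal in the energy basis. Subtracting the two expressions,
\begin{equation*}
S_k(\cg{\rho})-S_k(\Delta\rho)\le\Tr\!\bigl[(\cg{\Lambda}(Q^*)-\Delta(\cg{\Lambda}(Q^*)))\rho\bigr]\le\lambda_{\max}\!\bigl[\cg{\Lambda}(Q^*)-\Delta(\cg{\Lambda}(Q^*))\bigr].
\end{equation*}
The right-hand side is at most the maximum over all feasible $Q$. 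That bound no longer depends on $\rho$.

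\emph{Step 3: conclude.} Multiply each bracket bound by $E_k-E_{k-1}\ge0$, sum over $k$, and then take the maximum over $\rho$. This yields exactly the bound in Theorem~\ref{Th::Coherence upper bound intermediate}.

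The main point needing care is Step 2. One has to choose the comparison operator for $\Delta\rho$ as the dephased image of the \emph{same} $Q^*$, so that the two Ky Fan terms combine into a single off-diagonal operator. One also has to justify the self-duality of $\cg{\Lambda}$ and the feasibility of $Q'$. The remaining steps are routine bookkeeping.
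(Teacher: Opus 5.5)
Your proof is correct and follows essentially the same route as the paper. Both arguments rewrite $P_H$ as a weighted sum of Ky Fan partial sums, use the self-adjointness and unitality of $\Lambda_M^{\mathrm{cg}}$, take $\Delta(\Lambda_M^{\mathrm{cg}}(Q))$ as the diagonal comparison operator, and finish with $\max_\rho \Tr(\rho A)=\lambda_{\max}(A)$. The only cosmetic difference is that you bound each bracket pointwise in $\rho$ with a fixed optimiser $Q^*$ and maximise at the end, whereas the paper first moves the maximum over $\rho$ inside the sum; the two are equivalent since $E_k-E_{k-1}\geq 0$.
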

\begin{proof}
 From Eq.~\eqref{Eq::Observational ergotropic advantage passive} and~Eq. \eqref{Eq::passive energy}, we get 
 \begin{equation}\label{Eq::observational ergotropic advantage eigenvalue}
     A_M^H = \max_\rho\left( \sum_{i=0}^{d-1} E_i(\lambda^\downarrow_i(\Delta\rho)-\lambda^\downarrow_i(\cg{\rho}))\right)
 \end{equation}
 where $\lambda_i^\downarrow(\sigma)$ denotes the eigenvalues arranged in decreasing order for a state $\sigma$. We note that 
 \begin{equation}
   \sum_{i=0}^{d-1}E_i\lambda^\downarrow_i(\sigma) = E_{d-1}-\sum_{k=1}^{d-1}(E_k-E_{k-1})S_{k-1}(\sigma),
 \end{equation}
where $S_{k-1}(\sigma)= \sum_{i=0}^{k-1} \lambda^\downarrow_i(\sigma)$, i.e., sum of $k$-largest eigenvalues of the state $\sigma$. Note that, $k$ takes all integer values except 0, i.e., $k=1,\ldots,d-1 $. Substituting this on Eq.~\eqref{Eq::observational ergotropic advantage eigenvalue}, one derive the following 
 \begin{align}\label{Eq::no_name_yet1}
     A_M^H &= \max_\rho\left( \sum_{k=1}^{d-1} (E_k-E_{k-1})\left[S_{k-1}(\cg{\rho})-S_{k-1}(\Delta\rho)\right]\right)\nonumber\\
     &\leq \sum_{k=1}^{d-1} (E_k-E_{k-1})\max_\rho\left[S_{k-1}(\cg{\Lambda}(\rho))-S_{k-1}(\Delta\rho)\right],
 \end{align}
where the inequality comes from maximising individual terms of difference between $k$-largest eigenvalues, and we have used $\cg{\rho} = \cg{\Lambda}(\rho)$. Now, one can find that the sum of $k$-largest eigenvalues of a state as 
\begin{equation}\label{Eq::k-largest eigenvalue}
 S_{k-1}(\sigma)= \max_{\substack{
0 \leq Q \leq \mathbbm{1} \\
\Tr(Q)=k
}}   \Tr(Q\sigma).
\end{equation}
Similarly, for a diagonal state $\Delta\sigma$, the sum of $k$-largest eigenvalues are
\begin{equation}\label{Eq::k-largest eigenvalue diagonal}
 S_{k-1}(\Delta\sigma)= \max_{\substack{
0 \leq F \leq \mathbbm{1} \\
\Tr(F)=k\\
F = \Delta F
}}   \Tr(F\sigma).
\end{equation}
Using these expressions above, one can rewrite 
\begin{equation}\label{Eq::Difference k-largest eigenvalues in passive states and coarse-grained states}
   S_{k-1}(\cg{\Lambda}(\rho))-S_{k-1}(\Delta\rho) =  \max_{\substack{
0 \leq Q \leq \mathbbm{1} \\
\Tr(Q)=k
}}   \Tr(Q\cg{\Lambda}(\rho)) -  \max_{\substack{
0 \leq F \leq \mathbbm{1} \\
\Tr(F)=k\\
F = \Delta F
}}   \Tr(F\rho).
\end{equation}
From the self-adjointness of the map $\cg{\Lambda}$, we get $\Tr(Q\cg{\Lambda}(\rho)) = \Tr(\cg{\Lambda}(Q)\rho)$. The CPTP and unital condition of $\cg{\Lambda}$ yields $\Tr(\cg{\Lambda}(Q)) = \Tr(Q) = k$, and $0\leq\cg{\Lambda}(Q)\leq \Id$ for $0\leq Q\leq \Id$. Similarly, one can show that $\Tr(\Delta(\cg{\Lambda}(Q))) = k$ and $0\leq\Delta( \cg{\Lambda}(Q))\leq \Id$. Hence, $\Delta( \cg{\Lambda}(Q))$ satisfies the feasibility condition  (see Eq.~\eqref{Eq::k-largest eigenvalue diagonal}) of the maximising operator $F$ for any $Q: 0\leq Q \leq \Id$ and $\Tr(Q)=k$. Substituting $\Delta( \cg{\Lambda}(Q))$ in Eq.~\eqref{Eq::Difference k-largest eigenvalues in passive states and coarse-grained states}, 
\begin{align}\label{Eq::no_name_yet2}
    &S_{k-1}(\cg{\rho})-S_{k-1}(\Delta\rho)~\nonumber\\
    &\leq  \max_{\substack{
0 \leq Q \leq \mathbbm{1} \\
\Tr(Q)=k
}}   \left[\Tr(\cg{\Lambda}(Q)\rho) - \Tr(\Delta( \cg{\Lambda}(Q))\rho)\right]\nonumber\\
&= \max_{\substack{
0 \leq Q \leq \mathbbm{1} \\
\Tr(Q)=k
}}   \left[\Tr(\cg{\Lambda}(Q) - \Delta( \cg{\Lambda}(Q)))\rho\right].
\end{align}
Now, maximising both sides over all states $\rho$, one gets
\begin{align}\label{Eq::no_name_yet3}
 &\max_\rho\left[S_{k-1}(\cg{\Lambda}(\rho))-S_{k-1}(\Delta\rho)\right]~\nonumber\\  
 &\leq \max_{\substack{
0 \leq Q \leq \mathbbm{1} \\
\Tr(Q)=k
}} \max_\rho  \left[\Tr(\cg{\Lambda}(Q) - \Delta( \cg{\Lambda}(Q)))\rho\right]\nonumber\\
&= \max_{\substack{
0 \leq Q \leq \mathbbm{1} \\
\Tr(Q)=k
}} \lambda_{\text{max}}  \left[\cg{\Lambda}(Q) - \Delta( \cg{\Lambda}(Q))\right].
\end{align}
The equality in the last step comes from the fact that maximising the expectation value of an operator over all states reproduces its maximum eigenvalue, i.e., $\max_\rho \Tr(\rho A) = \lambda_{\text{max}}(A)$. Now combining both Eq.~\eqref{Eq::no_name_yet1} and \eqref{Eq::no_name_yet3}, we complete the proof 
\begin{align}\label{Eq::no_name_yet4}
    A_M^H \leq \sum_{k=1}^{d-1} (E_k-E_{k-1})\max_{\substack{
0 \leq Q \leq \mathbbm{1} \\
\Tr(Q)=k
}} \lambda_{\text{max}}  \left[\cg{\Lambda}(Q) - \Delta( \cg{\Lambda}(Q))\right]
\end{align}
\end{proof}


\begin{lemma}\label{Lemma::Eigenvalue upper bound}
    For any effect $0 \le A\le\mathbbm{1}$, the maximum eigenvalue $\lambda_{\rm max}(A-\Delta A)$ is upper bounded by $1-{1}/{d}$. \label{Le:A_del_A_eigen_max_b}
\end{lemma}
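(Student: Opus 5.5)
The plan is to write the dephasing map $\Delta$ explicitly as a uniform mixture of diagonal unitary conjugations. The bound then follows from operator ordering alone, with no explicit eigenvalue computation.

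\textbf{Step 1: a twirling representation of $\Delta$.} Let $\omega=e^{2\pi i/d}$ and define the diagonal unitary $Z=\sum_{j=0}^{d-1}\omega^{j}\ketbra{E_j}{E_j}$ in the energy eigenbasis. For any operator $A$,
\begin{equation*}
\frac{1}{d}\sum_{k=0}^{d-1} Z^{k}AZ^{-k}=\sum_{i,j}\Big(\frac{1}{d}\sum_{k=0}^{d-1}\omega^{k(i-j)}\Big)\bra{E_i}A\ket{E_j}\ketbra{E_i}{E_j}=\Delta A .
\end{equation*}
This holds because the inner geometric sum equals $d$ when $i=j$ and $0$ when $i\neq j$. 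Checking this identity is the only computation required.

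\textbf{Step 2: isolate the $k=0$ term.} Since $Z^0=\Id$, we can split off the first term and write
\begin{equation*}
A-\Delta A=\Big(1-\frac{1}{d}\Big)A-\frac{1}{d}\sum_{k=1}^{d-1}Z^{k}AZ^{-k}.
\end{equation*}
Because $A\geq 0$, each $Z^kAZ^{-k}$ is positive semidefinite, so the subtracted sum is $\geq 0$. Therefore $A-\Delta A\leq (1-1/d)A$ in the operator order. Because $A\leq\Id$, we further have $A-\Delta A\leq(1-1/d)\Id$.

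\textbf{Step 3: conclude.} For any unit vector $\ket{\phi}$, Step 2 gives $\bra{\phi}(A-\Delta A)\ket{\phi}\leq 1-1/d$. Taking the maximum over $\ket{\phi}$, which is the same Rayleigh-quotient characterisation of $\lambda_{\max}$ used in the proof of Theorem~\ref{Th::Coherence upper bound intermediate}, yields $\lambda_{\max}(A-\Delta A)\leq 1-1/d$.

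As a sanity check, the bound is tight. For $A=\ketbra{\psi}{\psi}$ with $\ket{\psi}$ maximally coherent, we have $\Delta A=\Id/d$, so $A-\Delta A=\ketbra{\psi}{\psi}-\Id/d$. Its largest eigenvalue is exactly $1-1/d$, which is consistent with the maximally coherent projectors appearing in the preceding Proposition.

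The main obstacle is seeing that $\Delta$ should be represented as a unitary average rather than bounding off-diagonal entries directly, for example through Gershgorin discs. A direct entrywise attack gives bounds that scale badly with $d$. Once the twirling form is used, the rest is immediate from positivity.
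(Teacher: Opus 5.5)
Your proof is correct. It shares its skeleton with the paper's proof. Both reduce the lemma to the operator inequality $\Delta(A)\geq A/d$ for $A\geq 0$ (the pinching inequality), and both finish with $A-\Delta(A)\leq(1-1/d)A\leq(1-1/d)\Id$, exactly as in your Steps 2--3.

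The difference is in how that inequality is obtained. The paper argues vector by vector. For $\ket{v}=\sum_i c_i\ket{E_i}$ it writes $\bra{v}A\ket{v}=\|\sum_i c_iA^{1/2}\ket{E_i}\|^2$ and applies the Cauchy--Schwarz bound $\|\sum_{i}\ket{x_i}\|^2\leq d\sum_i\|\ket{x_i}\|^2$, which gives $\bra{v}A\ket{v}\leq d\,\bra{v}\Delta(A)\ket{v}$. You instead write $\Delta$ as a uniform average of $d$ diagonal-unitary conjugations. Then $\Delta(A)$ is visibly $A/d$ plus a positive remainder, and the inequality follows from positivity alone.

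Your route makes the origin of the factor $1/d$ transparent: it is the weight of the identity in the unitary mixture. It also immediately gives the more general statement $A-\Phi(A)\leq(1-p_0)A$ for any mixed-unitary channel $\Phi$ in which the identity carries weight $p_0$. The paper's route is slightly more elementary, since it needs no roots of unity or auxiliary unitary.

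Both arguments adapt easily to block dephasing with $m$ blocks, giving $1-1/m$; this would be the relevant version for the degenerate-Hamiltonian extension mentioned in the conclusion. Your tightness example is the same one the paper gives after its proof.
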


For the proof of Lemma \ref{Le:A_del_A_eigen_max_b}, we refer the readers to Appendix \ref{App:proof_lemma_1}.
Now, we are ready to state and prove the relation between the robustness of the measurement coherence, and \OEA.

\begin{theorem}\label{Th::Coherence upper bounded by robustness} 
    For any measurement $M$, the \OEA is upper bounded by its coherence robustness (defined in Eq.~\eqref{Eq::Robustness of coherence}) multiplied with a scaling factor that depends on the dimension, and the spectral width of the Hamiltonian. More precisely, for an arbitrary $M\in \mathscr{M}(\cH_d, n)$
    \begin{equation}
        A_M^H \leq \frac{d-1}{d} (E_{d-1}-E_0)R_C(M).\label{Eq:OEA_Robust_relation}
    \end{equation}
\end{theorem}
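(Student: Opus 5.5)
The plan is to start from the intermediate bound of Theorem~\ref{Th::Coherence upper bound intermediate} and to show that every summand satisfies
\[
\max_{0\le Q\le\Id,\ \Tr(Q)=k}\lambda_{\max}\!\left[\cg{\Lambda}(Q)-\Delta(\cg{\Lambda}(Q))\right]\le \frac{d-1}{d}\,R_C(M)
\]
for each $k$. Once this is shown, the telescoping sum $\sum_{k=1}^{d-1}(E_k-E_{k-1})=E_{d-1}-E_0$ gives Eq.~\eqref{Eq:OEA_Robust_relation} directly.

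\textbf{Step 1: isolate the off-diagonal part.} Write out the coarse-graining channel explicitly:
\[
\cg{\Lambda}(Q)=\sum_\mu c_\mu M(\mu), \qquad c_\mu:=\frac{\Tr(M(\mu)Q)}{\Tr(M(\mu))}.
\]
Outcomes with $\Tr M(\mu)=0$ have $M(\mu)=0$ and can be dropped. Since $0\le Q\le\Id$, we have $0\le \Tr(M(\mu)Q)\le \Tr(M(\mu))$, so $c_\mu\in[0,1]$. Because $\Delta$ is linear,
\[
\cg{\Lambda}(Q)-\Delta(\cg{\Lambda}(Q))=\sum_\mu c_\mu\,\big(M(\mu)-\Delta M(\mu)\big).
\]
So only the off-diagonal parts of the effects matter.

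\textbf{Step 2: use the robustness decomposition.} Let $r=R_C(M)$, which is attained by convexity and compactness. Then there are a measurement $P$ and an incoherent measurement $N$ with
\[
M(\mu)=(1+r)N(\mu)-rP(\mu)\quad\forall\mu .
\]
Since $N(\mu)=\Delta N(\mu)$, this gives $M(\mu)-\Delta M(\mu)=-r\big(P(\mu)-\Delta P(\mu)\big)$. Hence the operator in question equals $r\,(\Delta B-B)$, where
\[
B:=\sum_\mu c_\mu P(\mu).
\]
Note that the weights $c_\mu$ are still defined through $M$. This does not matter: only $c_\mu\in[0,1]$ is needed.

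\textbf{Step 3: apply Lemma~\ref{Lemma::Eigenvalue upper bound}.} Because $0\le c_\mu\le 1$ and $\sum_\mu P(\mu)=\Id$, the operator $B$ is an effect, $0\le B\le \Id$, and so is $A:=\Id-B$. Since $\Delta\Id=\Id$,
\[
A-\Delta A=\Delta B-B .
\]
Lemma~\ref{Lemma::Eigenvalue upper bound} then yields $\lambda_{\max}(\Delta B-B)\le 1-1/d$. Therefore every $k$-term is at most $r(1-1/d)$, uniformly in $Q$ and $k$. Substituting into Theorem~\ref{Th::Coherence upper bound intermediate} completes the argument.

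\textbf{Main obstacle.} The delicate point is Step 2. The coarse-graining channel depends nonlinearly on $M$ through the normalisations $\Tr M(\mu)$, so one cannot push the robustness decomposition through $\cg{\Lambda}$ naively. The way around this is to freeze the coefficients $c_\mu$ first and use only their range $[0,1]$. After that, the expression is linear in the effects, and the sign flip $-rP$ is absorbed by passing from $B$ to the complementary effect $\Id-B$.
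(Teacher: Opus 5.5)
Your proposal is correct and follows the paper's proof essentially step for step. The paper also freezes the weights $q_\mu=\Tr(M(\mu)Q)/\Tr(M(\mu))\in[0,1]$ and uses the robustness decomposition to rewrite the operator as $r(K_P-\Delta K_P)$ with $K_P=\Id-\sum_\mu q_\mu P(\mu)$ (your $A=\Id-B$). It then applies Lemma~1 and telescopes the energy gaps. The only cosmetic difference is that you set $r=R_C(M)$ directly as an attained minimum, whereas the paper works with an arbitrary feasible $r$ and passes to the minimum at the end.
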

\begin{proof}
    From the definition of Robustness of coherence in Eq.~\eqref{Eq::Robustness of coherence}, we have $M(\mu)+rP(\mu) = (1+r)N(\mu)$ for any outcome $\mu$ and any feasible $r$. For any $Q: 0\leq Q \leq \Id$ and $\Tr(Q)=k$, we have 
    \begin{align}
     &\cg{\Lambda}(Q) - \Delta( \cg{\Lambda}(Q)) \nonumber\\
     &= \sum_\mu \frac{\Tr(M(\mu)Q)}{\Tr(M(\mu))} (M(\mu)-\Delta(M(\mu)))\nonumber\\
     &= -r\sum_\mu q_\mu (P(\mu)-\Delta(P(\mu)))\nonumber\\
     &=r\left(\Id-\sum_\mu q_\mu P(\mu)\right)-r\left(\Id-\sum_\mu q_\mu \Delta(P(\mu))\right)\nonumber\\
     &= r(K_P-\Delta(K_P)).\nonumber
    \end{align}
In the second step, we used $q_\mu = {\Tr(M(\mu)Q)}/{\Tr(M(\mu))}$. Now note that $0\leq {\Tr(M(\mu)Q)}/{\Tr(M(\mu))}\leq {\Tr(M(\mu)I)}/{\Tr(M(\mu))}$, or $0\leq q_\mu\leq 1$. In fourth step, we defined $K_P = \Id-\sum_\mu q_\mu P(\mu)$. Since $\sum_\mu P(\mu) =\Id$, and $0\leq q_\mu\leq1$, we have $0\leq K_P\leq \Id$ and using Lemma~\ref{Lemma::Eigenvalue upper bound}, we get
\begin{equation}\label{Eq::Eigenvalue maximum upper bound}
   \lambda_{\text{max}}  \left[\cg{\Lambda}(Q) - \Delta( \cg{\Lambda}(Q))\right] \leq r\left(1-\frac{1}{d}\right).
\end{equation}
Since Eq.~\eqref{Eq::Eigenvalue maximum upper bound} is valid for any feasible $Q$ and $r$, 
\begin{align}
&\max_{\substack{
0 \leq Q \leq \mathbbm{1} \\
\Tr(Q)=k
}} \lambda_{\text{max}}  \left[\cg{\Lambda}(Q) - \Delta\left( \cg{\Lambda}(Q))\right] \leq R_C(M)(1-\frac{1}{d}\right), \\
&\text{or equivalently,}~ A_M^H \leq R_C(M)\left(1-\frac{1}{d}\right)\sum_{k=1}^{d-1} (E_k-E_{k-1})\\
&\implies~~ A_M^H \leq R_C(M)\left(1-\frac{1}{d}\right)\left(E_{d-1}-E_0\right).
\end{align}
\end{proof}
In the following example, we show that the upper bound of \OEA given in Eq. \eqref{Eq:OEA_Robust_relation} is tight for qubits.

\begin{example}\label{Ex:Tightness for upper bound}[Tightness of Eq.~\eqref{Eq:OEA_Robust_relation}]
\rm{For qubit measurements, the upper bound of \OEA in terms of robustness is tight for every projective measurement. In particular,
\[
A_M^H=\frac{E}{2}R_C(M)~~~\forall M \in \mathscr{P}(\cH_2,2),
\]
for the qubit Hamiltonian $H=E\ketbra{1}{1}$.
As discussed in Remark~\ref{Ratio advantage does not work}, an arbitrary qubit state and a rank-$1$
projective measurement can be written as $\rho=\left(\mathbbm{1}+\vec n\cdot\vec{\sigma}\right)/2$, and
   $ M(0)=\left(\mathbbm{1}+\vec m\cdot\vec{\sigma}\right)/2$,
    $M(1)=\left(\mathbbm{1}-\vec m\cdot\vec{\sigma}\right)/2$, respectively, where $|\vec n|\leq 1$ and $|\vec m|=1$. For this state and measurement, we have
$\mathcal{E_O}(\rho,M)={E}\left(|\vec n\cdot\vec m|-n_z\right)/2$, whereas
$\mathcal{E}_{\rm inc}(\rho)={E}\left(|n_z|-n_z\right)/2$ (see Appendix~\ref{App:observational ergotropic advantage for qubits}). Therefore,
\begin{equation}
    A_M^H
    =
    \frac{E}{2}
    \max_{|\vec n|\leq 1}
    \left(
    |\vec n\cdot\vec m|-|n_z|
    \right).
\end{equation}
Using the triangle inequality followed by the Cauchy--Schwarz inequality, we have
\begin{align*}
    |\vec n\cdot\vec m|
    &= |n_xm_x+n_ym_y+n_zm_z| \\
    &\leq
    |n_xm_x+n_ym_y|+|n_zm_z| \\
    &\leq
    \sqrt{n_x^2+n_y^2}\sqrt{m_x^2+m_y^2}
    +|n_z||m_z|.
\end{align*}
Therefore,
\begin{align*}
    |\vec n\cdot\vec m|-|n_z|
    &\leq
    \sqrt{n_x^2+n_y^2}\sqrt{m_x^2+m_y^2}
    -
    |n_z|(1-|m_z|) \\
    &\leq
    \sqrt{m_x^2+m_y^2}.
\end{align*}
In the second inequality, we have used $\sqrt{n_x^2+n_y^2}\leq 1$, which follows from $|\vec n|\leq1$, together with $1-|m_z|\geq0$, since $|\vec m|=1$. Thus, the second term is non-positive and can be dropped when obtaining an upper bound. The bound is achieved by choosing $n_z=0$ and $(n_x,n_y)$ parallel to $(m_x,m_y)$. Hence,
\begin{equation*}
    A_M^H
    =
    \frac{E}{2}\sqrt{m_x^2+m_y^2}.
\end{equation*}
For qubit measurements, the robustness of measurement coherence is equal to the $l_\infty$-norm based measure, i.e., $R_C(M)=C_{l_\infty}(M)$~\cite{Baek_2020_Quantifying}. Since there is only one pair of energy
eigenstates in the qubit case, from Eq.~\eqref{Eq::Infinity-norm based coherence} we have
\begin{align*}
    C_{l_\infty}(M)
    &=
    \sum_{\mu=0}^{1}
    \left|
    \bra{0}M(\mu)\ket{1}
    \right| \\
    &=
    \frac{1}{2}\sqrt{m_x^2+m_y^2}
    +
    \frac{1}{2}\sqrt{m_x^2+m_y^2} \\
    &=
    \sqrt{m_x^2+m_y^2}.
\end{align*}
Therefore,
\begin{equation*}
    A_M^H
    =
    \frac{E}{2}C_{l_\infty}(M)
    =
    \frac{E}{2}R_C(M).
\end{equation*}
Hence, every rank-$1$ projective qubit measurement saturates the upper bound in Theorem~3.}
\end{example}

\begin{theorem}\label{Th::OEA lower bound}
For any measurement $M$, the \OEA is lower bounded by its coherence, quantified by the $C_{l_\infty}$ norm measure of coherence, multiplied with the minimum energy-gap of the Hamiltonian with a dimensional factor. Precisely for $M\in \mathscr{M}(\cH_d, n)$, 
\begin{equation}\label{Eq::OEA lower bound}
   A_M^H \geq \frac{\min_k(E_k-E_{k-1})}{d^2}[C_{l_\infty}(M)]^2 
\end{equation}
\end{theorem}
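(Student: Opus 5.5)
The plan is to use the same perturbative input as in the sufficiency part of Theorem~\ref{Th::Coherence necessary and sufficient}. The off-diagonal perturbation will now be chosen along the pair of energy levels that attains $C_{l_\infty}(M)$.

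\emph{Choice of input state.} Fix the pair $i<j$ that attains the maximum in Eq.~\eqref{Eq::Infinity-norm based coherence}, and write $z_\mu=\bra{j}M(\mu)\ket{i}$, so that $C_{l_\infty}(M)=\sum_\mu|z_\mu|$. For a phase $\alpha$, set
\begin{equation*}
X_\alpha=e^{-i\alpha}\ketbra{i}{j}+e^{i\alpha}\ketbra{j}{i}.
\end{equation*}
This operator is Hermitian and traceless, has eigenvalues $\pm1$ (and $0$), and satisfies $\Delta X_\alpha=0$. Hence $\rho=\Id/d+\epsilon X_\alpha$ is a valid state for $\epsilon=1/d$, with $\Delta\rho=\Id/d$.

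\emph{Reducing to a spectral quantity.} Since $\cg{\Lambda}$ is unital, $\cg{\rho}=\Id/d+\epsilon Y$ with $Y=\cg{\Lambda}(X_\alpha)$ traceless. By Eq.~\eqref{Eq::Observational ergotropic advantage passive} and the telescoping identity used in the proof of Theorem~\ref{Th::Coherence upper bound intermediate},
\begin{equation*}
A_M^H\geq -\epsilon\sum_i E_i y_i^\downarrow=\epsilon\sum_{k=1}^{d-1}(E_k-E_{k-1})S_{k-1}(Y)\geq \epsilon\,\min_k(E_k-E_{k-1})\sum_{k=1}^{d-1}S_{k-1}(Y).
\end{equation*}
Because $\Tr Y=0$, the partial sums $k\mapsto S_{k-1}(Y)$ form a concave sequence. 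It vanishes at $k=0$ and $k=d$, and is nonnegative in between. Concavity gives $S_{k-1}(Y)\geq \frac{d-k}{d-1}\lambda_{\max}(Y)$ for $1\le k\le d-1$, hence
\begin{equation*}
\sum_{k=1}^{d-1}S_{k-1}(Y)\geq \frac{d}{2}\lambda_{\max}(Y).
\end{equation*}

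\emph{Lower bound on $\lambda_{\max}(Y)$.} The operator $\sigma=(\Id+X_\alpha)/d$ is a valid state. Using $\Tr Y=0$ and the self-adjointness of $\cg{\Lambda}$,
\begin{equation*}
\lambda_{\max}(Y)\geq\Tr(Y\sigma)=\frac{1}{d}\Tr\!\big(X_\alpha\cg{\Lambda}(X_\alpha)\big)=\frac1d\sum_\mu\frac{[\Tr(M(\mu)X_\alpha)]^2}{\Tr M(\mu)}.
\end{equation*}
Here $\Tr(M(\mu)X_\alpha)=2\,\mathrm{Re}(e^{-i\alpha}z_\mu)$.

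The main obstacle is that the phases of the $z_\mu$ differ across outcomes, so no single $\alpha$ aligns all of them. I would handle this by averaging over $\alpha$: since $\mathbb{E}_\alpha[(2\,\mathrm{Re}\,e^{-i\alpha}z)^2]=2|z|^2$, some $\alpha$ achieves at least the average, namely $\sum_\mu 2|z_\mu|^2/\Tr M(\mu)$. Cauchy--Schwarz with $\sum_\mu\Tr M(\mu)=d$ then gives
\begin{equation*}
\sum_\mu\frac{|z_\mu|^2}{\Tr M(\mu)}\geq \frac{C_{l_\infty}(M)^2}{d}.
\end{equation*}
Thus $\lambda_{\max}(Y)\geq 2C_{l_\infty}(M)^2/d^2$.

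\emph{Assembling the bound.} Combining the three estimates with $\epsilon=1/d$ gives
\begin{equation*}
A_M^H\geq \frac1d\cdot\min_k(E_k-E_{k-1})\cdot\frac d2\cdot\frac{2C_{l_\infty}(M)^2}{d^2}=\frac{\min_k(E_k-E_{k-1})}{d^2}\,[C_{l_\infty}(M)]^2,
\end{equation*}
which is Eq.~\eqref{Eq::OEA lower bound}. Outcomes with $\Tr M(\mu)=0$ have $M(\mu)=0$ and are simply dropped from the sums.
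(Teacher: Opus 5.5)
Your proof is correct and shares the paper's skeleton: the same $X_\phi$ on the pair attaining $C_{l_\infty}(M)$, the same input $(\Id\pm X_\phi)/d$ with $\Delta\rho=\Id/d$, and the same phase average plus Cauchy--Schwarz, giving $\Tr[X_\phi\cg{\Lambda}(X_\phi)]\geq 2[C_{l_\infty}(M)]^2/d$. You differ in how this is converted into a passive-energy gap, and the two differences cancel exactly.

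The paper uses the $\pm1$ eigenvectors of $X_\phi$ and chooses the sign of the perturbation so that, with $Y=\cg{\Lambda}(X_\phi)$, one has $\lambda_{\max}(\pm Y)\geq\tfrac12\Tr(X_\phi Y)\geq [C_{l_\infty}(M)]^2/d$. It then keeps only the $k=1$ term of the telescoped sum, i.e.\ $P_H(\Id/d)-P_H(\sigma)\geq\min_k(E_k-E_{k-1})[\lambda_{\max}(\sigma)-1/d]$.

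You fix the sign and use the test state $(\Id+X_\alpha)/d$. This gives only $\lambda_{\max}(Y)\geq\tfrac1d\Tr(X_\alpha Y)$, weaker by a factor $2/d$. You win that factor back by keeping all $d-1$ terms and using concavity of $k\mapsto S_{k-1}(Y)$ to get $\sum_k S_{k-1}(Y)\geq\tfrac d2\lambda_{\max}(Y)$.

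The paper's route needs no concavity lemma. Yours avoids the sign case split and handles outcomes with $\Tr(M(\mu))=0$ explicitly, which the paper leaves implicit.

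The stronger halves of the two arguments also combine. Applying your concavity estimate to the paper's sign-chosen $\pm Y$ yields $A_M^H\geq\frac{\min_k(E_k-E_{k-1})}{2d}[C_{l_\infty}(M)]^2$. This equals the stated bound for $d=2$ and strictly improves it for every $d\geq3$.
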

\begin{proof}
From the definition of $C_{l_\infty}(M)$ in Eq.~\eqref{Eq::Infinity-norm based coherence}, let the maximum of the absolute value is obtained for a pair of energy eigenstates $\ket{E_i}$ and $\ket{E_j}$ with $i<j$ such that 
\begin{equation}
    C_{l_\infty}(M)
    =
    \sum_\mu
    \left|
    \bra{E_i}M(\mu)\ket{E_j}
    \right|
\end{equation}
Now for any phase $\phi$, we define
\begin{equation*}
    X_\phi
    =
    e^{i\phi}\ketbra{E_i}{E_j}
    +
    e^{-i\phi}\ketbra{E_j}{E_i}.
\end{equation*}
Since $X_\phi$ has eigenvalues $\{1,-1,0,\ldots,0\}$, the states
\begin{equation}
    \rho_{\phi,\pm}
    =
    \frac{1}{d}
    \left(
    \mathbbm{1}\pm X_\phi
    \right)
\end{equation}
are valid density matrices. Moreover,
\begin{equation}
    \Delta\rho_{\phi,\pm}
    =
    \frac{\mathbbm{1}}{d}.
\end{equation}

Using the definition of the coarse-graining map in Eq.~\eqref{Eq::Coarse-graining channel},
\begin{align}
    \Tr\left[
    X_\phi\Lambda_M^{\rm cg}(X_\phi)
    \right]
    &=
    \sum_\mu
    \frac{
    \left[
    \Tr(M(\mu)X_\phi)
    \right]^2
    }{\Tr(M(\mu))}.
\end{align}
Averaging the above expression over $\phi$ gives
\begin{equation}
    \frac{1}{2\pi}
    \int_0^{2\pi}
    d\phi\,
    \Tr\left[
    X_\phi\Lambda_M^{\rm cg}(X_\phi)
    \right]
    =
    2
    \sum_\mu
    \frac{
    |\langle E_i|M(\mu)|E_j\rangle|^2
    }{\Tr(M(\mu))}.
\end{equation}
Using the Cauchy--Schwarz inequality together with
$\sum_\mu\Tr(M(\mu))=d$, we have
\begin{align}
    \left[C_{l_\infty}(M)\right]^2
    &=
    \left(
    \sum_\mu
    |\langle E_i|M(\mu)|E_j\rangle|
    \right)^2
    \nonumber\\
    &\leq
    d
    \sum_\mu
    \frac{
    |\langle E_i|M(\mu)|E_j\rangle|^2
    }{\Tr(M(\mu))}.
\end{align}
Therefore, there exists a phase $\phi$ such that
\begin{equation}
    \Tr\left[
    X_\phi\Lambda_M^{\rm cg}(X_\phi)
    \right]
    \geq
    \frac{2}{d}
    \left[C_{l_\infty}(M)\right]^2.
    \label{eq:lower_bound_intermediate}
\end{equation}

Since $X_\phi$ has one eigenvalue $+1$ and one eigenvalue $-1$,
Eq.~\eqref{eq:lower_bound_intermediate} implies that, for one of the
two choices of sign,
\begin{equation}
    \lambda_{\max}
    \left(
    \pm\Lambda_M^{\rm cg}(X_\phi)
    \right)
    \geq
    \frac{1}{d}
    \left[C_{l_\infty}(M)\right]^2.
\end{equation}
We choose the corresponding state $\rho_{\phi,\pm}$. Since
$\Lambda_M^{\rm cg}$ is unital,
\begin{equation}
    \rho_M^{\rm cg}
    =
    \frac{\mathbbm{1}}{d}
    \pm
    \frac{1}{d}
    \Lambda_M^{\rm cg}(X_\phi),
\end{equation}
and consequently
\begin{equation}
    \lambda_{\max}(\rho_M^{\rm cg})
    -
    \frac{1}{d}
    \geq
    \frac{1}{d^2}
    \left[C_{l_\infty}(M)\right]^2.
    \label{eq:lambda_lower_bound}
\end{equation}

Using the expression for the passive energy introduced in the proof of
Theorem~\ref{Th::Coherence upper bound intermediate},
\begin{equation}
    P_H(\sigma)
    =
    E_{d-1}
    -
    \sum_{k=1}^{d-1}
    (E_k-E_{k-1})S^{k-1}(\sigma),
\end{equation}
we obtain
\begin{align}
    P_H\left(\frac{\mathbbm{1}}{d}\right)-P_H(\sigma)
    &=
    \sum_{k=1}^{d-1}
    (E_k-E_{k-1})
    \left[
    S^{k-1}(\sigma)-\frac{k}{d}
    \right]
    \nonumber\\
    &\geq
    \min_k(E_k-E_{k-1})
    \left[
    \lambda_{\max}(\sigma)-\frac{1}{d}
    \right],
\end{align}
where we used the fact that every state majorises the maximally mixed
state, so that
$S^{k-1}(\sigma)\geq k/d$ for every $k$.

For the state chosen above,
$\Delta\rho_{\phi,\pm}=\mathbbm{1}/d$. Hence, from
Eqs.~\eqref{Eq::Observational ergotropic advantage passive} and~\eqref{eq:lambda_lower_bound},
\begin{align}
    A_M^H
    &\geq
    P_H(\Delta\rho_{\phi,\pm})
    -
    P_H(\rho_M^{\rm cg})
    \nonumber\\
    &\geq
    \frac{\min_k(E_k-E_{k-1})}{d^2}
    \left[C_{l_\infty}(M)\right]^2.
\end{align}
This proves the desired lower bound.
\end{proof} 
\begin{example}\label{Ex:Tightness of lower bound}[Tightness of Eq.~\eqref{Eq::OEA lower bound}]
\begingroup
\setlength{\parindent}{0pt}
\setlength{\parskip}{0.5\baselineskip}
\rm The lower bound in Theorem~\ref{Th::OEA lower bound} is tight for qubit POVMs. To see this,
consider the Hamiltonian $H=E\ketbra{1}{1}$ and the three-outcome
measurement $M=\{M(0),M(1),M(2)\}$ with 
\begin{equation*}
    M(\mu)
    =
    \frac{1}{3}
    \left(
    \mathbbm{1}
    +
    c\,\Vec{m}_{\mu}\cdot\Vec{\sigma}
    \right),
    \qquad 0\leq c\leq 1,
\end{equation*}
where
\begin{equation*}
    \Vec{m}_0=(1,0,0),~~
    \Vec{m}_1=
    \left(-\frac{1}{2},\frac{\sqrt{3}}{2},0\right),~~
    \Vec{m}_2=
    \left(-\frac{1}{2},-\frac{\sqrt{3}}{2},0\right).
\end{equation*}
Since $\sum_{\mu}\Vec{m}_{\mu}=0$, the effects satisfy
$\sum_{\mu}M(\mu)=\mathbbm{1}$, and hence define a valid POVM.
For $c=1$, this reduces to the rank-$1$ trine measurement.

For each outcome,
\begin{equation*}
    \left|\bra{0}M(\mu)\ket{1}\right|
    =
    \frac{c}{3},
\end{equation*}
and therefore
\begin{equation*}
    C_{l_\infty}(M)
    =
    \sum_{\mu=0}^{2}
    \left|\bra{0}M(\mu)\ket{1}\right|
    =
    c.
\end{equation*}

Now consider an arbitrary qubit state
$\rho=\frac{1}{2}(\mathbbm{1}+\Vec{n}\cdot\Vec{\sigma})$.
The probability of outcome $\mu$ is
\begin{equation*}
    p(\mu)
    =
    \Tr[\rho M(\mu)]
    =
    \frac{1}{3}
    \left(
    1+c\,\Vec{n}\cdot\Vec{m}_{\mu}
    \right).
\end{equation*}
Since $\Tr[M(\mu)]=2/3$, the normalized effects are
\begin{equation*}
    \frac{M(\mu)}{\Tr[M(\mu)]}
    =
    \frac{1}{2}
    \left(
    \mathbbm{1}
    +
    c\,\Vec{m}_{\mu}\cdot\Vec{\sigma}
    \right).
\end{equation*}
Hence, the coarse-grained state is
\begin{align*}
    \rho_M^{\rm cg}
    &=
    \sum_{\mu=0}^{2}
    p(\mu)
    \frac{M(\mu)}{\Tr[M(\mu)]} \\
    &=
    \frac{1}{2}
    \left[
    \mathbbm{1}
    +
    \frac{c^2}{2}
    \left(
    n_x\sigma_x+n_y\sigma_y
    \right)
    \right],
\end{align*}
where we have used
\begin{equation*}
    \sum_{\mu=0}^{2}
    \Vec{m}_{\mu}\Vec{m}_{\mu}^{\,T}
    =
    \frac{3}{2}
    \begin{pmatrix}
        1 & 0 & 0\\
        0 & 1 & 0\\
        0 & 0 & 0
    \end{pmatrix}.
\end{equation*}
Therefore,
\begin{equation*}
    P_H(\rho_M^{\rm cg})
    =
    \frac{E}{2}
    \left(
    1-
    \frac{c^2}{2}\sqrt{n_x^2+n_y^2}
    \right).
\end{equation*}
On the other hand,
\begin{equation*}
    P_H(\Delta\rho)
    =
    \frac{E}{2}(1-|n_z|).
\end{equation*}
It follows that
\begin{equation*}
    P_H(\Delta\rho)-P_H(\rho_M^{\rm cg})
    =
    \frac{E}{2}
    \left[
    \frac{c^2}{2}\sqrt{n_x^2+n_y^2}
    -
    |n_z|
    \right].
\end{equation*}
The maximum is obtained for an equatorial pure state, for which
$n_z=0$ and $n_x^2+n_y^2=1$. Consequently,
\begin{equation*}
    A_M^H
    =
    \frac{Ec^2}{4}.
\end{equation*}
Since $C_{l_\infty}(M)=c$, we finally obtain
\begin{equation*}
    A_M^H
    =
    \frac{E}{4}
    \left[C_{l_\infty}(M)\right]^2,
\end{equation*}
which saturates the lower bound in Theorem~\ref{Th::OEA lower bound}.
\endgroup
\end{example}
An important point to emphasise is that, although we have shown that the observational ergotropic advantage is both upper and lower bounded by measures of measurement coherence, operationally it is a very different quantity. First, we make the following remark using a couple of examples.
\begin{remark}
\rm{The observational ergotropic advantage is not monotonic with the
robustness of measurement coherence, even for qubit measurements.
In particular, measurements with larger robustness can have smaller
observational ergotropic advantage, and measurements with the same
robustness can have different observational ergotropic advantages.

As a first counterexample, consider the qubit Hamiltonian
$H=E\ketbra{1}{1}$ and the two measurements
\begin{align*}
    M_1(0)
    &=
    \frac{1}{2}
    \left(
    \mathbbm{1}
    +
    \frac{3}{5}\sigma_x
    +
    \frac{4}{5}\sigma_z
    \right),
    \qquad
    M_1(1)=\mathbbm{1}-M_1(0), \\
    M_2(0)
    &=
    \frac{1}{2}
    \left(
    \mathbbm{1}
    +
    \frac{7}{10}\sigma_x
    \right),
    \qquad
    M_2(1)=\mathbbm{1}-M_2(0).
\end{align*}
Since for qubit measurements $R_C(M)=C_{l_\infty}(M)$~\cite{Baek_2020_Quantifying}, we immediately
obtain
\begin{equation*}
    R_C(M_1)=\frac{3}{5},
    \qquad
    R_C(M_2)=\frac{7}{10}.
\end{equation*}
The measurement $M_1$ is rank-$1$ projective and hence, from
Example~\ref{Ex:Tightness for upper bound},
\begin{equation*}
    A_{M_1}^H=\frac{3E}{10}.
\end{equation*}
For $M_2$, an arbitrary input state
$\rho=(\mathbbm{1}+\vec n\cdot\vec{\sigma})/2$ gives
\begin{equation*}
    \rho_{M_2}^{\rm cg}
    =
    \frac{1}{2}
    \left(
    \mathbbm{1}
    +
    \frac{49}{100}n_x\sigma_x
    \right).
\end{equation*}
Therefore, a direct maximisation of
$P_H(\Delta\rho)-P_H(\rho_{M_2}^{\rm cg})$ gives
\begin{equation*}
    A_{M_2}^H=\frac{49E}{200}.
\end{equation*}
Consequently,
\begin{equation*}
    R_C(M_1)<R_C(M_2),
    ~~~ \text{while}~~~~
    A_{M_1}^H>A_{M_2}^H.
\end{equation*}

As a second counterexample, consider $M_1$ above together with
\begin{equation*}
    M_3(0)
    =
    \frac{1}{2}
    \left(
    \mathbbm{1}
    +
    \frac{3}{5}\sigma_x
    \right),
    \qquad
    M_3(1)=\mathbbm{1}-M_3(0).
\end{equation*}
Again using $R_C(M)=C_{l_\infty}(M)$ for qubits, we have
\begin{equation*}
    R_C(M_1)=R_C(M_3)=\frac{3}{5}.
\end{equation*}
For $M_3$, the corresponding coarse-grained state is
\begin{equation*}
    \rho_{M_3}^{\rm cg}
    =
    \frac{1}{2}
    \left(
    \mathbbm{1}
    +
    \frac{9}{25}n_x\sigma_x
    \right),
\end{equation*}
which gives
\begin{equation*}
    A_{M_3}^H=\frac{9E}{50}.
\end{equation*}
Hence,
\begin{equation*}
    R_C(M_1)=R_C(M_3),
    ~~~\text{while}~~~~
    A_{M_1}^H=\frac{3E}{10}
    \neq
    \frac{9E}{50}=A_{M_3}^H.
\end{equation*}
These counterexamples show that the robustness of measurement
coherence does not determine the ordering, or even the value, of the
observational ergotropic advantage.}
\end{remark}
In addition to that, we also note that \OEA itself does not constitute a valid coherence monotone. The relevant measures of measurement coherence are monotonically non-increasing under free operations in the corresponding resource theory, such as strictly incoherent operations (SIO). In contrast, we find that the observational ergotropic advantage does not satisfy such a monotonicity property. In particular, in Appendix~\ref{Counter example of the PIO}, we provide a counterexample showing that this quantity fails to be monotonic even under physically incoherent operations (PIO), which constitute one of the most restrictive classes of free operations~\cite{chitambar2016critical}. Therefore, despite its close connection to established measures of measurement coherence through the derived bounds, the observational ergotropic advantage cannot itself serve as a coherence monotone for quantifying measurement coherence.

\section{Conclusion}
\label{Sec:conc}

In this work, we have studied the quantitative relationship between observational ergotropic advantage and measurement coherence by introducing a quantifier of the former and deriving upper and lower bounds of it in terms of robustness of measurement coherence and the $l_{\infty}$-norm-based measure. We have observed that measurement coherence is not only necessary, but also sufficient for observational ergotropic advantage. Furthermore, we have proved the tightness of these bounds for qubit cases. Interestingly, we  have found that more measurement coherence does not imply more observational ergotropic advantage, in general and also have shown that the advantage quantifier is not monotone under the resource theory of measurement coherence.  This study will serve as the foundational understanding on the measurement coherence as a resource in extracting work from unknown quantum systems.

Our work opens up several interesting research directions. The first immediate question is: under which resource theory, our observational ergotropic advantage quantifier is a monotone? Secondly, in this work, as we have restricted ourselves to only the thermodynamic application of measurement coherence, it is also important to explore the application of measurement coherence in some information-theoretic tasks or some quantum algorithms. Thirdly, this work can easily be generalized for any degenerate Hamiltonian which involves the block-measurement coherence.

\section{Acknowledgement}
SM acknowledges funding from Sydney Quantum Academy. AM acknowledges funding from STARS (Grant No. STARS/STARS-2/2023-0809), Government of India.

\bibliography{references}

\appendix
\section{Proof of Lemma 1}
\label{App:proof_lemma_1}
\begin{proof}
Let $A$ be an arbitrary effect, i.e., $0\leq A\leq\mathbbm{1}$, and
let $\{\ket{E_i}\}_{i=0}^{d-1}$ denote the energy eigenbasis in which
the dephasing map $\Delta$ is defined. For an arbitrary unit vector
$\ket{v}=\sum_i c_i\ket{E_i}$, we have
\begin{align*}
    \bra{v}A\ket{v}
    &=
    \left\|A^{1/2}\ket{v}\right\|^2 \\
    &=
    \left\|
    \sum_i c_i A^{1/2}\ket{E_i}
    \right\|^2.
\end{align*}
Using the Cauchy--Schwarz inequality,
\begin{equation*}
    \left\|\sum_i \ket{x_i}\right\|^2
    \leq
    d\sum_i \|\ket{x_i}\|^2,
\end{equation*}
for any collection of $d$ vectors $\{\ket{x_i}\}$, we obtain
\begin{align*}
    \bra{v}A\ket{v}
    &\leq
    d\sum_i |c_i|^2
    \left\|A^{1/2}\ket{E_i}\right\|^2 \\
    &=
    d\sum_i |c_i|^2
    \bra{E_i}A\ket{E_i}.
\end{align*}
Since
\begin{equation*}
    \Delta(A)
    =
    \sum_i
    \bra{E_i}A\ket{E_i}
    \ketbra{E_i}{E_i},
\end{equation*}
the above inequality can be written as
\begin{equation*}
    \bra{v}A\ket{v}
    \leq
    d\,\bra{v}\Delta(A)\ket{v}.
\end{equation*}
Equivalently,
\begin{equation*}
    \bra{v}\Delta(A)\ket{v}
    \geq
    \frac{1}{d}\bra{v}A\ket{v}.
\end{equation*}
It therefore follows that
\begin{align*}
    \bra{v}\bigl[A-\Delta(A)\bigr]\ket{v}
    &\leq
    \left(1-\frac{1}{d}\right)
    \bra{v}A\ket{v} \\
    &\leq
    1-\frac{1}{d},
\end{align*}
where the second inequality follows from $A\leq\mathbbm{1}$ and
$\bra{v}v\rangle=1$. Since the above relation holds for every unit
vector $\ket{v}$, maximising over all such vectors gives
\begin{equation*}
    \lambda_{\max}\left[A-\Delta(A)\right]
    \leq
    1-\frac{1}{d}.
\end{equation*}
This proves the result.
\end{proof}
Furthermore, we show that the bound is tight with the following example. Consider a maximally coherent
rank-$1$ projector
\begin{equation*}
    A=\ket{\psi}\bra{\psi},
    \qquad
    \ket{\psi}
    =
    \frac{1}{\sqrt{d}}
    \sum_{i=0}^{d-1} e^{\mathrm{i}\phi_i}\ket{E_i}.
\end{equation*}
For this effect,
\begin{equation*}
    \Delta(A)=\frac{\mathbbm{1}}{d},
\end{equation*}
and hence
\begin{equation*}
    \lambda_{\max}\left[A-\Delta(A)\right]
    =
    \lambda_{\max}\left[
    \ketbra{\psi}{\psi}-\frac{\mathbbm{1}}{d}
    \right]
    =
    1-\frac{1}{d}.
\end{equation*}
Note that, this also shows that the dimensional prefactor appearing in Theorem~\ref{Th::Coherence upper bounded by robustness} is optimal.
\section{Observational Ergotropy and Incoherent Ergotropy for a projective measurement on qubit states}
\label{App:observational ergotropic advantage for qubits}
\begingroup
\setlength{\parindent}{0pt}
\setlength{\parskip}{0.5\baselineskip}

Here, we explicitly calculate the observational ergotropy and the
incoherent ergotropy for a qubit state under a rank-1 projective
measurement. We consider the Hamiltonian
\begin{equation*}
    H=E\ketbra{1}{1}.
\end{equation*}
An arbitrary qubit state can be written in the Bloch representation as
\begin{equation*}
    \rho
    =
    \frac{1}{2}
    \left(
    \mathbbm{1}+\Vec{n}\cdot\Vec{\sigma}
    \right),
    \qquad
    |\Vec{n}|\leq 1,
\end{equation*}
where $\Vec{\sigma}=(\sigma_x,\sigma_y,\sigma_z)$. Similarly, an
arbitrary rank-$1$ projective measurement can be written as
\begin{equation*}
    M(0)
    =
    \frac{1}{2}
    \left(
    \mathbbm{1}+\Vec{m}\cdot\Vec{\sigma}
    \right),
    \qquad
    M(1)
    =
    \frac{1}{2}
    \left(
    \mathbbm{1}-\Vec{m}\cdot\Vec{\sigma}
    \right),
\end{equation*}
where $|\Vec{m}|=1$.

The probabilities corresponding to the two measurement outcomes are
\begin{align*}
    p(0)=\Tr[\rho M(0)]
    &=\frac{1}{2}\left(1+\Vec{n}\cdot\Vec{m}\right),
    \\
    p(1)=\Tr[\rho M(1)]
    &=\frac{1}{2}\left(1-\Vec{n}\cdot\Vec{m}\right).
\end{align*}
Since $\Tr[M(0)]=\Tr[M(1)]=1$, the corresponding coarse-grained state
is
\begin{align*}
    \rho_M^{\rm cg}
    &=
    p(0)M(0)+p(1)M(1)\\
    &=
    \frac{1}{2}
    \left[
    \mathbbm{1}
    +
    (\Vec{n}\cdot\Vec{m})\Vec{m}\cdot\Vec{\sigma}
    \right].
\end{align*}
Its eigenvalues are therefore
\begin{equation*}
    \lambda_{\pm}\left(\rho_M^{\rm cg}\right)
    =
    \frac{1}{2}
    \left(
    1\pm|\Vec{n}\cdot\Vec{m}|
    \right).
\end{equation*}
The passive state assigns the larger eigenvalue to the ground state and
the smaller eigenvalue to the excited state. Hence,
\begin{equation*}
    P_H\left(\rho_M^{\rm cg}\right)
    =
    \frac{E}{2}
    \left(
    1-|\Vec{n}\cdot\Vec{m}|
    \right).
\end{equation*}
On the other hand,
\begin{equation*}
    \Tr(H\rho)
    =
    E\bra{1}\rho\ket{1}
    =
    \frac{E}{2}(1-n_z).
\end{equation*}
Using the definition of observational ergotropy, we obtain
\begin{align*}
    \mathcal{E}_O(\rho,M)
    &=
    \Tr(H\rho)
    -
    P_H\left(\rho_M^{\rm cg}\right)\\
    &=
    \frac{E}{2}
    \left(
    |\Vec{n}\cdot\Vec{m}|-n_z
    \right).
\end{align*}

We next calculate the incoherent ergotropy. Dephasing $\rho$ in the
energy eigenbasis gives
\begin{equation*}
    \Delta\rho
    =
    \frac{1}{2}
    \left(
    \mathbbm{1}+n_z\sigma_z
    \right),
\end{equation*}
whose eigenvalues are
\begin{equation*}
    \lambda_{\pm}(\Delta\rho)
    =
    \frac{1}{2}
    \left(
    1\pm|n_z|
    \right).
\end{equation*}
Therefore, its passive energy is
\begin{equation*}
    P_H(\Delta\rho)
    =
    \frac{E}{2}
    \left(
    1-|n_z|
    \right).
\end{equation*}
Since $\Tr(H\Delta\rho)=\Tr(H\rho)=E(1-n_z)/2$, the incoherent
ergotropy becomes
\begin{align*}
    \mathcal{E}_{\rm inc}(\rho)
    &=
    \Tr(H\Delta\rho)-P_H(\Delta\rho)\\
    &=
    \frac{E}{2}
    \left(
    |n_z|-n_z
    \right).
\end{align*}
\endgroup
\section{Observational ergotropic advantage is not a monotone under any resource theory of measurement coherence}\label{Counter example of the PIO}

In the resource theory of quantum coherence, coherence is defined with respect to a fixed reference basis. As in any resource-theoretic framework, two central ingredients are the specification of the free states and the free operations. The free states are the incoherent states, whose density matrices are diagonal in this basis, whereas states containing off-diagonal elements are regarded as resourceful. Similarly, free operations are those that cannot generate coherence from incoherent states, with different choices of allowed operations. Few examples are maximally incoherent operations (MIO), incoherent operations (IO), strictly incoherent operations (SIO), and physically incoherent operations (PIO) which gives rise to different resource-theoretic frameworks. A valid coherence measure is therefore required to vanish on incoherent states and to be non-increasing under the chosen class of free operations.

This framework can be extended from quantum states to quantum measurements by treating a POVM $A=\{A(a)\}$ as the resource object. A measurement is said to be incoherent if its outcome statistics are insensitive to the coherence present in the input state, which is equivalent to requiring each POVM element $A(a)$ to be diagonal in the reference basis. Coherent measurements are therefore characterized by non-vanishing off-diagonal components and can access the coherence of quantum states. In the resource theory of measurement coherence introduced in Ref.~\cite{Baek_2020_Quantifying}, the free transformations are induced by the dual action of strictly incoherent operations (SIO), and a legitimate measure of measurement coherence must be non-increasing under such transformations.

Among the different classes of incoherent operations, physically incoherent operations (PIO) constitute a particularly restrictive class and were introduced to characterize operations that admit a fully resource-free physical implementation~\cite{chitambar2016critical}. Their Kraus operators can be written in the form $K_j = U_j P_j$, where $\{P_j\}$ forms a set of mutually orthogonal incoherent projectors and $U_j$ are incoherent unitaries. In the measurement setting, a PIO acts through the corresponding dual map,
\begin{equation}
    M(\mu) \longmapsto \sum_j K_j^\dagger M(\mu) K_j,
\end{equation}
and hence provides a particularly stringent class of free transformations for testing whether a proposed quantifier of measurement coherence satisfies the required monotonicity condition.

We now show that the observational ergotropic advantage is not monotonic under arbitrary physically incoherent operations (PIO)~\cite{chitambar2016critical}. In particular, we construct a PIO $\Lambda$ and a qutrit measurement $M$ such that
\begin{equation}
A_{\Lambda^\dagger(M)}^H>A_M^H.
\end{equation}

Consider a three-dimensional system with Hamiltonian
\begin{equation}\label{hamiltonian counter example}
H = 0\ketbra{E_0}{E_0} + \ketbra{E_1}{E_1} + 2\ketbra{E_2}{E_2},
\end{equation}
where the energy eigenbasis is taken as the incoherent basis.

Define the channel
\begin{equation}
\Lambda(\rho) = K_1\rho K_1^\dagger + K_2\rho K_2^\dagger,
\end{equation}
with Kraus operators
\begin{equation}
K_1 =
\begin{pmatrix}
1&0&0\\
0&1&0\\
0&0&0
\end{pmatrix},
\qquad
K_2 =
\begin{pmatrix}
0&0&1\\
0&0&0\\
0&0&0
\end{pmatrix}.
\end{equation}
Clearly,
\begin{equation}
K_1^\dagger K_1+K_2^\dagger K_2=\mathbbm{1},
\end{equation}
so $\Lambda$ is trace preserving.

Moreover, $\Lambda$ is a PIO. To see this, define the incoherent projectors
\begin{equation}
P_1 = \ketbra{E_0}{E_0} + \ketbra{E_1}{E_1}, \qquad P_2 = \ketbra{E_2}{E_2},
\end{equation}
and the incoherent unitaries
\begin{equation}
U_1=\mathbbm{1}, \qquad U_2 = \ketbra{E_0}{E_2} + \ketbra{E_1}{E_1} + \ketbra{E_2}{E_0}.
\end{equation}
Then
\begin{equation}
K_1=U_1P_1, \qquad K_2=U_2P_2,
\end{equation}
which is precisely the Kraus form of a physically incoherent operation~\cite{chitambar2016critical}. 

Now consider the two-outcome measurement
\begin{equation}
M=\{M(0),M(1)\}, \qquad M(1)=\mathbbm{1}-M(0),
\end{equation}
where
\begin{equation}
M(0) =
\begin{pmatrix}
\frac{11}{50} & -\frac{7}{20} & 0\\
-\frac{7}{20} & \frac{29}{50} & 0\\
0&0&\frac{29}{50}
\end{pmatrix}.
\end{equation}
The eigenvalues of $M(0)$ and $M(1)$ lie in $[0,1]$. Therefore, $M$ is a valid POVM.

The dual action of $\Lambda$ is
\begin{equation}
\Lambda^\dagger(X) = K_1^\dagger XK_1 + K_2^\dagger XK_2.
\end{equation}
Consequently, for
\begin{equation}
M':=\Lambda^\dagger(M),
\end{equation}
we obtain
\begin{equation}
M'(0) =
\begin{pmatrix}
\frac{11}{50} & -\frac{7}{20} & 0\\
-\frac{7}{20} & \frac{29}{50} & 0\\
0&0&\frac{11}{50}
\end{pmatrix},
\qquad M'(1)=\mathbbm{1}-M'(0).
\end{equation}
Thus, $M'$ is also a valid POVM.

It remains to compare $A_M^H$ and $A_{M'}^H$. Since both quantities are defined through a maximization over all input states $\rho$, the main task is to determine their global maxima. Rather than performing a numerical optimization over the full set of qutrit density matrices, we carry out the maximization analytically by parameterizing the relevant degrees of freedom of $\rho$. This allows us to identify a maximizing state $\rho^*$ and to obtain exact expressions for both $A_M^H$ and $A_{M'}^H$. To this end, it is useful to first derive a general expression for an arbitrary binary qutrit measurement. Let

\begin{equation}
N=\{N(0),\mathbbm{1}-N(0)\},
\end{equation}
and define
\begin{equation}
V:=\Tr[N(0)], \qquad t:=\Tr[\rho N(0)].
\end{equation}
Since the system is three dimensional,
\begin{equation}
\Tr[\mathbbm{1}-N(0)]=3-V,
\end{equation}
and since the measurement has only two outcomes, the probability of the second outcome is $1-t$. Therefore, the coarse-grained state is
\begin{align}
\rho_N^{cg} &= t\frac{N(0)}{V} + (1-t)\frac{\mathbbm{1}-N(0)}{3-V} \nonumber\\
&= \frac{1-t}{3-V}\mathbbm{1} + \left( \frac{t}{V} - \frac{1-t}{3-V} \right)N(0).
\label{qutrit cg}
\end{align}

Equation~\eqref{qutrit cg} shows that $\rho_N^{cg}$ is an affine function of $N(0)$. Hence, $N(0)$ and $\rho_N^{cg}$ have the same eigenvectors. If $m_i$ are the eigenvalues of $N(0)$, the corresponding eigenvalues of $\rho_N^{cg}$ are
\begin{equation}
\lambda_i(\rho_N^{cg}) = \frac{1-t}{3-V} + \left( \frac{t}{V} - \frac{1-t}{3-V} \right)m_i.
\end{equation}
We now define the spectral width of $N(0)$ as
\begin{equation}
D_N := \lambda_{\max}[N(0)] - \lambda_{\min}[N(0)].
\end{equation}
It follows that the spectral width of the coarse-grained state is
\begin{equation}
\lambda_{\max}(\rho_N^{cg}) - \lambda_{\min}(\rho_N^{cg}) = D_N \left| \frac{t}{V} - \frac{1-t}{3-V} \right|.
\end{equation}
The absolute value is required because the coefficient multiplying $N(0)$ can be either positive or negative. Simplifying the coefficient gives
\begin{equation}
\lambda_{\max}(\rho_N^{cg}) - \lambda_{\min}(\rho_N^{cg}) = D_N \frac{|3t-V|}{V(3-V)}.
\label{eq:cg-spectral-width}
\end{equation}

For the considered Hamiltonian \eqref{hamiltonian counter example} here, the passive energy of an arbitrary qutrit state $\omega$ takes a particularly simple form. Let the eigenvalues of $\omega$ be ordered as
\begin{equation}
\lambda_{\max}(\omega) \geq \lambda_{\mathrm{mid}}(\omega) \geq \lambda_{\min}(\omega).
\end{equation}
The passive state assigns the largest eigenvalue to the lowest energy, the middle eigenvalue to the middle energy, and the smallest eigenvalue to the highest energy. Hence,
\begin{align}
P_H(\omega) &= \lambda_{\mathrm{mid}}(\omega) + 2\lambda_{\min}(\omega) \nonumber\\
&= 1- \left[ \lambda_{\max}(\omega) - \lambda_{\min}(\omega) \right].
\label{eq:qutrit-passive-width}
\end{align}
Thus, for the equally spaced qutrit Hamiltonian, the passive energy depends only on the spectral width of the state.

Using Eq.~\eqref{eq:cg-spectral-width}, we immediately obtain
\begin{equation}\label{eq:passive-cg-general}
P_H(\rho_N^{cg}) = 1- D_N\frac{|3t-V|}{V(3-V)}.
\end{equation}

Next, write the diagonal entries of the input state in the energy eigenbasis as
\begin{equation}
p_i:=\bra{E_i}\rho\ket{E_i}, \qquad p_0+p_1+p_2=1.
\end{equation}
The dephased state is
\begin{equation}
\Delta\rho = \sum_{i=0}^{2} p_i\ket{E_i}\bra{E_i},
\end{equation}
and its eigenvalues are simply $\{p_0,p_1,p_2\}$. Therefore,
\begin{equation}
P_H(\Delta\rho) = 1- (p_{\max}-p_{\min}),
\end{equation}
where
\begin{equation}
p_{\max}:=\max\{p_0,p_1,p_2\}, \qquad p_{\min}:=\min\{p_0,p_1,p_2\}.
\end{equation}

Combining this result with Eq.~\eqref{eq:passive-cg-general}, we find
\begin{equation}\label{reduced observational advantage}
P_H(\Delta\rho)-P_H(\rho_N^{cg}) = D_N \frac{|3t-V|}{V(3-V)} - (p_{\max}-p_{\min}).
\end{equation}
This expression is the key simplification: the first term depends on the measurement through $D_N$, $V$, and the outcome probability $t$, while the second term penalizes unequal populations in the input state.

We now specialize Eq.~\eqref{reduced observational advantage} to the measurements $M$ and $M'$. In both cases, the only nonzero off-diagonal entries occur in the $\{\ket{E_0},\ket{E_1}\}$ block. Therefore, the only coherence of $\rho$ that can contribute to $\Tr[\rho M(0)]$ or $\Tr[\rho M'(0)]$ is $\rho_{01}$. Since the off-diagonal entries of both effects are real, only the real part of $\rho_{01}$ is relevant. We define
\begin{equation}
x := \mathrm{Re} \bra{E_0}\rho\ket{E_1}.
\end{equation}

Positivity of $\rho$ implies positivity of every principal submatrix. In particular,
\begin{equation}
\begin{pmatrix}
p_0 & \rho_{01}\\
\rho_{01}^* & p_1
\end{pmatrix}
\geq0,
\end{equation}
which requires
\begin{equation}
p_0p_1-|\rho_{01}|^2\geq0.
\end{equation}
Hence,
\begin{equation}
|\rho_{01}|\leq\sqrt{p_0p_1},
\end{equation}
and therefore
\begin{equation}\label{eq:coherence-bound}
|x|\leq\sqrt{p_0p_1}.
\end{equation}

For the original measurement $M(0)$,
\begin{align}
t &= \Tr[\rho M(0)] \nonumber\\
&= \frac{11}{50}p_0 + \frac{29}{50}p_1 + \frac{29}{50}p_2 - \frac{7}{10}x.
\end{align}
Using $p_0+p_1+p_2=1$, this reduces to
\begin{equation}\label{eq:t-original-reduced}
t = \frac{29}{50} - \frac{9}{25}p_0 - \frac{7}{10}x.
\end{equation}
The corresponding volume is
\begin{equation}
V = \Tr[M(0)] = \frac{69}{50}.
\end{equation}

For the transformed measurement $M'(0)$,
\begin{align}
t' &= \Tr[\rho M'(0)] \nonumber\\
&= \frac{11}{50} + \frac{9}{25}p_1 - \frac{7}{10}x,
\label{eq:t-transformed-reduced}
\end{align}
where we used $p_0+p_2=1-p_1$. Its volume is
\begin{equation}
V' = \Tr[M'(0)] = \frac{51}{50}.
\end{equation}

The nontrivial $2\times2$ coherent block is identical for $M(0)$ and $M'(0)$. Its eigenvalues are
\begin{equation}
\frac{40\pm\sqrt{1549}}{100},
\end{equation}
and these respectively give the maximal and minimal eigenvalues of both effects. Consequently,
\begin{equation}
D_M = D_{M'} = \frac{\sqrt{1549}}{50}.
\label{eq:DMMprime}
\end{equation}

Substituting Eqs.~\eqref{eq:t-original-reduced}, \eqref{eq:t-transformed-reduced}, and \eqref{eq:DMMprime} into Eq.~\eqref{reduced observational advantage} reduces the optimization over arbitrary qutrit states to the four real parameters $p_0,p_1,p_2$, and $x$, subject to
\begin{equation}
p_i\geq0, \qquad p_0+p_1+p_2=1, \qquad |x|\leq\sqrt{p_0p_1}.
\end{equation}

For fixed populations, the dependence on $x$ appears only through the absolute value of an affine function of $x$. Hence, for fixed $p_0,p_1,p_2$, the maximum is attained at one of the boundary values allowed by Eq.~\eqref{eq:coherence-bound},
\begin{equation}
x=\pm\sqrt{p_0p_1}.
\end{equation}
Thus, the maximizing state uses the largest coherence between $\ket{E_0}$ and $\ket{E_1}$ compatible with its diagonal populations.

The second term in Eq.~\eqref{reduced observational advantage},
\begin{equation}
-(p_{\max}-p_{\min}),
\end{equation}
is always non-positive and vanishes only when all populations are equal. This suggests choosing
\begin{equation}
p_0=p_1=p_2=\frac{1}{3}.
\end{equation}
For these populations, the positivity bound becomes
\begin{equation}
|x|\leq\frac13,
\end{equation}
so that one can simultaneously eliminate the population penalty and retain the maximal allowed coherence. The global maximization of the reduced expression is attained for
\begin{equation}
p_0=p_1=p_2=\frac{1}{3}, \qquad x=-\frac{1}{3}.
\end{equation}
A density matrix realizing these parameters is
\begin{equation}
\rho_* =
\begin{pmatrix}
\frac{1}{3}&-\frac{1}{3}&0\\
-\frac{1}{3}&\frac{1}{3}&0\\
0&0&\frac{1}{3}
\end{pmatrix}.
\label{eq:optimal-state-PIO}
\end{equation}
The eigenvalues of $\rho_*$ are
\begin{equation}
\left\{ \frac{2}{3},\frac{1}{3},0 \right\},
\end{equation}
so $\rho_*$ is a valid density matrix. Moreover,
\begin{equation}
\Delta\rho_*=\frac{\mathbbm{1}}{3},
\end{equation}
and therefore
\begin{equation}
p_{\max}-p_{\min}=0.
\end{equation}

For this state, Eq.~\eqref{eq:t-original-reduced} gives
\begin{equation}
t=\frac{52}{75},
\end{equation}
and therefore
\begin{equation}
3t-V=\frac{7}{10}.
\end{equation}
Using Eq.~\eqref{reduced observational advantage},
\begin{align}
A_M^H &= D_M\frac{|3t-V|}{V(3-V)} \nonumber\\
&= \frac{35\sqrt{1549}}{5589} \nonumber\\
&\simeq 0.24647.
\end{align}

Similarly, for the transformed measurement, Eq.~\eqref{eq:t-transformed-reduced} gives
\begin{equation}
t'=\frac{43}{75},
\end{equation}
so that
\begin{equation}
3t'-V'=\frac{7}{10}.
\end{equation}
Hence,
\begin{align}
A_{M'}^H &= D_{M'}\frac{|3t'-V'|}{V'(3-V')} \nonumber\\
&= \frac{35\sqrt{1549}}{5049} \nonumber\\
&\simeq 0.27283.
\end{align}

Thus,
\begin{equation}
A_{M'}^H>A_M^H.
\end{equation}


\end{document}